\long\def\ca#1\cb{} 
\newcommand{\abs}[2][]{#1| #2 #1|}
\newcommand{\ketbra}[2]{| \hspace{1pt} #1 \rangle \langle #2 \hspace{1pt} |}
\newcommand{\bramatket}[3]{\langle #1 \hspace{1pt} | #2 | \hspace{1pt} #3 \rangle}
\newcommand{\norm}[2][]{#1| \! #1| #2 #1| \! #1|}
\newcommand{\ket}[1]{|#1\rangle}               
\newcommand{\bra}[1]{\langle #1|}              
\newcommand{\dya}[1]{\ket{#1}\!\bra{#1}}
\newcommand{\poly}{\operatorname{poly}}
\newcommand{\chull}{\text{Conv}}
\newcommand{\rank}{\text{rank}}
\newcommand{\HC}{\mathcal{H}}
\newcommand{\IC}{\mathcal{I}}
\newcommand{\SC}{\mathcal{S}}
\newcommand{\Tr}{{\rm Tr}}
\renewcommand{\geq}{\geqslant}
\renewcommand{\leq}{\leqslant}
\newcommand{\ad}{^\dagger}
\newcommand*{\id}{\openone}
\newtheorem{lemma}{Lemma}
\newtheorem{proposition}{Proposition}
\newtheorem{definition}{Definition}
\begin{document}
\title{Generalized Measure of Quantum Fisher Information}

\author{Akira Sone}
\thanks{These authors contributed equally to this work.}
\affiliation{Aliro Technologies, Inc. Boston, Massachusetts 02135, USA}
\affiliation{Theoretical Division, Los Alamos National Laboratory, Los Alamos, New Mexico 87544, USA}
\affiliation{Center for Nonlinear Studies, Los Alamos National Laboratory, Los Alamos, New Mexico 87544, USA}
\affiliation{Quantum Science Center, Oak Ridge, Tennessee 37931, USA}

\author{M. Cerezo}
\thanks{These authors contributed equally to this work.}
\affiliation{Theoretical Division, Los Alamos National Laboratory, Los Alamos, New Mexico 87544, USA}
\affiliation{Center for Nonlinear Studies, Los Alamos National Laboratory, Los Alamos, New Mexico 87544, USA}
\affiliation{Quantum Science Center, Oak Ridge, Tennessee 37931, USA}

\author{Jacob L. Beckey}
\affiliation{Theoretical Division, Los Alamos National Laboratory, Los Alamos, New Mexico 87544, USA}
\affiliation{JILA, NIST and University of Colorado, Boulder, Colorado 80309, USA}
\affiliation{Department of Physics, University of Colorado, Boulder, Colorado 80309, USA}

\author{Patrick J. Coles}
\affiliation{Theoretical Division, Los Alamos National Laboratory, Los Alamos, New Mexico 87544, USA}
\affiliation{Quantum Science Center, Oak Ridge, Tennessee 37931, USA}

\begin{abstract} 
In this work, we present a lower bound on the quantum Fisher information (QFI) which is efficiently computable on near-term quantum devices. This bound itself is of interest, as we show that it satisfies the canonical criteria of a QFI measure. Specifically, it is essentially a QFI measure for subnormalized states, and hence it generalizes the standard QFI in this sense. Our bound employs the generalized fidelity applied to a truncated state, which is constructed via the $m$ largest eigenvalues and their corresponding eigenvectors of the probe quantum state $\rho_{\theta}$. Focusing on unitary families of exact states, we analyze the properties of our proposed lower bound, and demonstrate its utility for efficiently estimating the QFI.
\end{abstract}
\maketitle

\section{Introduction}
\label{sec:intro}

Quantum Fisher information (QFI)~\cite{Hayashi_2004,Jing20} quantifies the ultimate precision with which one can estimate a parameter $\theta$ from a $\theta$-dependent quantum state $\rho_{\theta}$ via the quantum Cram\'{e}r-Rao bound. This quantity is of fundamental importance for quantum metrology~\cite{giovannetti2011advances,demkowicz2012elusive,Degen16x,LucaReview,meyer2020variational}. Moreover, the QFI has been studied in the context of quantum phase transitions~\cite{Nori14,Ye16,macieszczak2016dynamical}, quantum information geometry~\cite{Amari07,Fujiwara95}, and quantum information~\cite{Pezze09,Modi11,Hyllus12, Kim18nonclassical, Sone18a,Sone19a,takeoka2016optimal,katariya2020geometric}. 

The general definition of the QFI is
\begin{align}
    I(\theta;\rho_{\theta})=\Tr[J_{\theta}^2\rho_{\theta}]\,,
\label{eq:standardQFI}
\end{align}
where $J_{\theta}$ is called symmetric logarithmic derivative (SLD) operator satisfying the following Lyapunov equation:
\begin{align}
    \partial_{\theta}\rho_{\theta}=\frac{1}{2}(J_{\theta}\rho_{\theta}+\rho_{\theta}J_{\theta})\,.
\label{eq:standardSLD}
\end{align}
Also, the QFI is associated with the standard fidelity between the exact state $\rho_{\theta}$ and the error state $\rho_{\theta+\delta}$ as
\begin{align}\label{eq:qfifid}
    I(\theta;\rho_{\theta})=8\lim_{\delta\to0}\frac{1-F(\rho_{\theta},\rho_{\theta+\delta})}{\delta^2}\,,
\end{align}
where $F(\rho_1,\rho_2)=|\!|\sqrt{\rho_1}\sqrt{\rho_2}|\!|_{1}$ is the standard fidelity, and with the trace norm given by $\norm{A}_1=\Tr[\sqrt{AA\ad}]$.

In spite of its theoretical significance, the QFI is in general a difficult quantity to compute. Calculating the SLD operator requires one to solve the Lyapunov equation, which in turn needs full knowledge of the exact state  $\rho_{\theta}$, which is not always known in practice. In addition, when employing Eq.~\eqref{eq:qfifid} to determine the QFI, one encounters the serious difficulty that there is no efficient algorithm to compute the fidelity between arbitrary states. The complexity of the classical algorithms for fidelity estimation can scale exponentially due to the exponentially large dimension of the density matrices with respect to the number of qubits~\cite{cerezo2020variationalfidelity}. But even quantum algorithms face complexity theoretic arguments \cite{watrous2002quantum}, and the fact that the nonlinear nature of fidelity implies that a finite number of copies of $\rho_\theta$ cannot lead to an exact computation of the fidelity. Hence, instead of exactly computing the QFI, one can {estimate the QFI by bounding it}~\cite{GirolamiPRL,GirolamiExpPRA}.

This is precisely the goal of this paper, where we introduce an efficiently computable lower bound for the QFI. Our bound is based on the truncated (and therefore subnormalized) state constructed by projecting the exact state $\rho_{\theta}$ into the subspace of its $m$-largest eigenvalues. Particularly, we focus on the family of quantum states of the form $\rho_{\theta}=W(\theta)\rho W\ad(\theta)$, where $\rho$ is called probe state, and we define $W(\theta)=e^{-i\theta G}$ with a Hermitian and $\theta$-independent generator $G$. As in Ref.~\cite{Mateo09}, we refer to the set of states of this form as a \textit{unitary family}. This family of states is general enough to describe phase estimation tasks, such as magnetometry~\cite{Taylor08, Dutta20, Degen16x}.

Our results are derived by employing the concepts of generalized fidelity~\cite{tomamichel2010duality,tomamichel2015quantum} and truncated states~\cite{cerezo2020variationalfidelity} to construct an efficiently computable quantity which we call truncated quantum Fisher information (TQFI). Our main results are a series of lemmas that prove that TQFI lower bounds the standard QFI, and that TQFI satisfies various properties, including most of the canonical criteria for a measure of QFI. In addition, we also introduce a quantity that we call the generalized Bures distance, from which we provide a geometrical interpretation to the TQFI. {We note that in our recent work~\cite{Beckey2020}, we have proposed a trainable variational quantum algorithm to estimate QFI and further prepare the optimal state for phase estimation by using TQFI.}

This paper is organized as follows. We first provide theoretical background in Sec.~\ref{sec:background}. Then, Sec.~\ref{sec:truncatedQFI} introduces the TQFI and its associated Hermitian SLD operator, and presents our main results. Finally, we offer some concluding remarks in Sec.~\ref{sec:conclusion}.

\section{Theoretical background}
\label{sec:background}

In this section we provide some theoretical background that will be useful to define the TQFI. Specifically, we discuss the generalized fidelity, a measure of distinguishability for subnormalized states. We then discuss how the generalized fidelity can be used to construct an upper bound for the standard fidelity. We remark that this bound will be the basis of the definition of the TQFI.

Let $\HC$ be a $d$-dimensional Hilbert space. A quantum state $\rho$ on $\HC$ is defined as a Hermitian, positive semi-definite operator of trace equal to 1. Hence, the set of normalized quantum states on $\HC$ can be defined as
\begin{equation}
    \SC_{=}(\HC)=\{\rho\,:\,\rho\ad=\rho\,,\,\rho\geq 0\,,\,\Tr[\rho]=1\}\,,
\end{equation}
which forms a convex set with real dimension $(d^2-1)$. Relaxing the normalization condition, one arrives at the following definition.
\begin{definition}[Subnormalized state]\label{def:subnorm}
A Hermitian, positive semi-definite operator $\tau$ on $\HC$ is said to be a subnormalized quantum state if $\Tr[\tau]\leq 1$.
\end{definition}

Definition~\ref{def:subnorm} allows us to introduce $\SC_{\leq}(\HC)$ as the set of subnormalized states on $\HC$, that is 
\begin{equation}
    \SC_{\leq}(\HC)=\{\tau\,:\tau\ad=\tau\,,\,\tau\geq 0\,,\,\Tr[\tau]\leq1\}\,.
\end{equation}
As schematically shown in Fig.~\ref{fig1}, it follows that $\SC_{=}(\HC)\subset\SC_{\leq}(\HC)$. Moreover, $\SC_{\leq}(\HC)$ has dimensionality $d^2$, and can be obtained as the convex hull of the set of quantum states and the zero operator $\SC_{\leq}(\HC)=\chull(0,\SC_{=}(\HC))$~\cite{cappellini2007subnormalized}. subnormalized quantum states have been used in quantum information theory as a convenient generalization of normalized quantum states  \cite{cappellini2007subnormalized,tomamichel2010duality,tomamichel2015quantum}. Moreover, exciting new work on near-term quantum algorithms utilizes truncated, and thus subnormalized, quantum states to avoid having to store an exponentially large density matrix, thus making the algorithms implementable on the noisy intermediate-scale quantum (NISQ) computers \cite{cerezo2020variationaleigensolver,cerezo2020variationalfidelity,Beckey2020}. This exciting new research direction is the primary motivation for this work.

In Refs.~\cite{tomamichel2010duality,tomamichel2015quantum} the authors introduced a generalization of the standard quantum fidelity to subnormalized states, which is known as the {\it generalized fidelity}, and which is given as follows.
\begin{definition}[Generalized fidelity]
Given two subnormalized states $\tau,\sigma\in\SC_{\leq}(\HC)$, the generalized fidelity between $\tau$ and $\sigma$ is
\begin{equation}
    F_*(\tau,\sigma)= \norm{\sqrt{\tau}\sqrt{\sigma}}_1+\sqrt{(1-\Tr[\tau])(1-\Tr[\sigma])}\,,
\end{equation}
where $\norm{A}_1=\Tr[\sqrt{AA\ad}]$ is the trace norm.
\end{definition}
Note that the generalized fidelity reduces to the standard fidelity $F$ if at least one of the two states is normalized. That is,
\begin{equation}
    F_*(\tau,\sigma)=F(\tau,\sigma)= \norm{\sqrt{\tau}\sqrt{\sigma}}_1\,,
\end{equation}
if $\tau$ or $\sigma$ is in $\SC_{=}(\HC)$.

\begin{figure}[t!]
    \centering
    \includegraphics[width=.8\columnwidth]{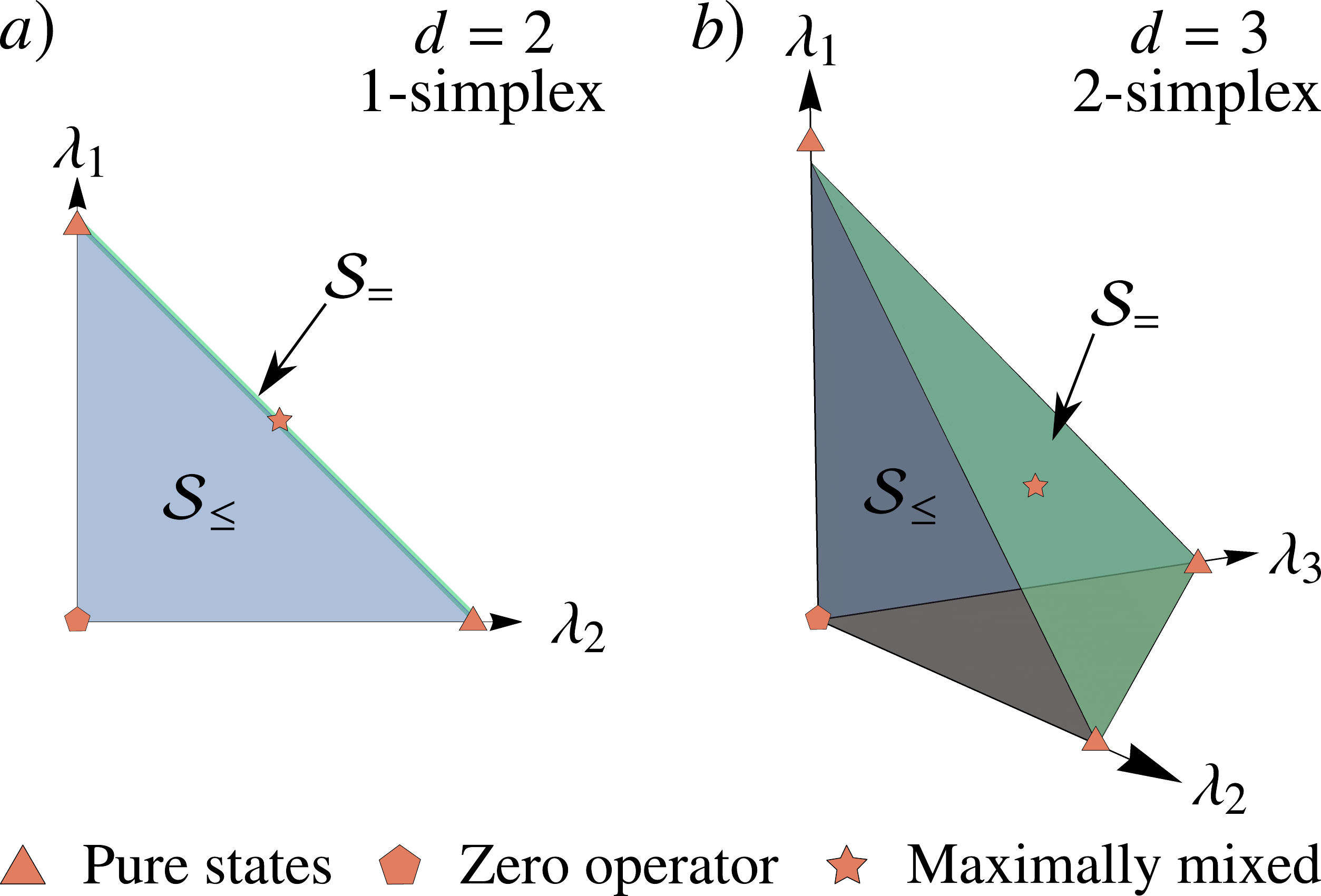}
    \caption{Set of eigenvalues $\{\lambda_i\}_{i=1}^d$ for a subnormalized quantum state. Simplexes are shown for a $d$-dimensional Hilbert space with (a) $d=2$ and (b) $d=3$. The eigenvalues of normalized states in $\SC_{=}(\HC)$ lie on a $(d^2-1)$-simplex: a line segment in (a), and a triangle in (b). The eigenvalues of a pure state lie on the edges of the simplex, while those of a mixed state are on the centroid of the simplex. The eigenvalues of subnormalized states in $\SC_{\leq}(\HC)$ can be obtained from the subnormalization condition $\sum_{i=1}^d\lambda_i\leq 1$. In the diagrams, the origin corresponds to the zero operator. 
    }
    \label{fig1}
\end{figure}

As shown in Refs.~\cite{tomamichel2010duality,tomamichel2015quantum,cerezo2020variationalfidelity}, the generalized fidelity has the following relevant properties:
\begin{itemize}
    \item Invariance under unitary transformations. Given two subnormalized states $\tau,\sigma\in\SC_{\leq}(\HC)$, and for any unitary $V$ in the unitary group $U(d)$ of degree $d$, we have 
    \begin{align}
    F_*(V\tau V\ad,V\sigma V\ad)=F_*(\tau,\sigma)\,.
    \end{align}
    \item Concavity. Given subnormalized states $\tau_1,\tau_2,\sigma_1,\sigma_2\in\SC_{\leq}(\HC)$, and a real number $q\in[0,1]$, then
    \begin{equation}
    \begin{split}
    F_*(q\tau_1+&(1-q)\tau_2,q\sigma_1+(1-q)\sigma_2)\\
    &\geq q F_*(\tau_1,\sigma_1)+(1-q)F_*(\tau_2,\sigma_2)\,.
    \end{split}
    \end{equation}
    \item Monotonicity under completely positive trace non-increasing (CPTNI) maps. Given two subnormalized states $\tau,\sigma\in\SC_{\leq}(\HC)$, and a CPTNI map $\Phi$, then 
    \begin{align}
    F_*(\tau,\sigma) \leq F_*(\Phi(\tau),\Phi(\sigma))\,.
    \end{align}
\end{itemize}

We note that CPTNI maps are the mathematical generalization of completely positive trace-preserving (CPTP) maps, which become useful when one allows for subnormalized quantum states \cite{tomamichel2015quantum,tomamichel2010duality,cappellini2007subnormalized}. Additionally, they have the physical interpretation of describing an experiment in which the measurement apparatus does not work with some probabilities~ \cite{cappellini2007subnormalized}. 

Let us now discuss how the generalized fidelity can be used to upper bound the standard fidelity. Consider a projection operator $\Pi$ which maps states to a subspace of $\HC$. Note that  $\Pi$ defines a CPTNI map as $\Phi(\rho)=\Pi\rho\Pi$ which maps states in $\SC_{=}(\HC)$ and in $\SC_{\leq}(\HC)$ to subnormalized states in $\SC_{\leq}(\HC)$.  Then, from the monotonicity under CPTNI maps of the generalized fidelity, the following bound on the standard fidelity $F(\rho,\widetilde{\rho})$ holds for any pair of normalized states $\rho$ and $\widetilde{\rho}$~\cite{cerezo2020variationalfidelity}:
    \begin{equation}\label{eq:upperbound}
        F(\rho,\widetilde{\rho})\leq F_*(\Pi\,\rho\,\Pi,\Pi\,\widetilde{\rho}\,\Pi)\,.
    \end{equation}

In Ref.~\cite{cerezo2020variationalfidelity}, the authors proposed an algorithm that can efficiently compute the upper bound in  Eq.~\eqref{eq:upperbound} for certain $\Pi$. Specifically, in that work, $\Pi$ is the operator that  projects onto the Hilbert space spanned by the eigenvectors of the $m$-largest eigenvalues of $\rho$.  That is, given the spectral decomposition $\rho=\sum_i\lambda_i\dya{\lambda_i}$, we define 
\begin{align}
    \Pi_\rho^m=\sum_{i=1}^m \dya{\lambda_i}\,.
\end{align}
This operator allows us to introduce the truncated states $ \rho^{(m)}$ and $ \widetilde{\rho}^{(m)}$:
\begin{equation}
\begin{split}
    \rho^{(m)}&=\Pi_\rho^m\rho \Pi_\rho^m=\sum_{i=1}^m\lambda_i\dya{\lambda_i}\,,\\
    \widetilde{\rho}^{(m)}&=\Pi_\rho^m\widetilde{\rho} \Pi_\rho^m\,,
\end{split}
\end{equation}
which leads to the following expression of the generalized fidelity for these states:
\begin{equation} \label{eq:gen-fidelity}
    F_*(\rho^{(m)}\!,\widetilde{\rho}^{(m)})= \Tr\left[\!\sqrt{T}\right]+\sqrt{(1\!-\!\Tr[\rho^{(m)}])(1\!-\!\Tr[\widetilde{\rho}^{(m)}])}\,.
\end{equation}
Here, $T$ is a positive semi-definite $m\times m$ operator given by
\begin{equation}
    T=\sum_{i,j=1}^m \sqrt{\lambda_i\lambda_j}\bramatket{\lambda_i}{\widetilde{\rho}}{\lambda_j}\ketbra{\lambda_i}{\lambda_j}\,.
\end{equation}
Finally, let us remark that the upper bound $F_*(\rho^{(m)}\!,\widetilde{\rho}^{(m)})\geq F(\rho,\widetilde{\rho})$ gets monotonically tighter with $m$, with equality holding if $m=\rank(\rho)$~\cite{cerezo2020variationalfidelity}.

\section{Truncated Quantum Fisher Information}
\label{sec:truncatedQFI}

From the discussions above, in this section, we introduce a generalized measure of the QFI definable with the subnormalized state, which we call TQFI, and show that it is a lower bound on the standard QFI. We then present some of its properties in the form of lemmas, which we prove in the Appendices, and present its geometrical interpretation in the space of subnormalized states.

 \subsection{Definition of the TQFI}
 
 Consider the (normalized) exact state $\rho_\theta$, and the (normalized) error state $\rho_{\theta+\delta}$. These states encode the information of an unknown parameter $\theta$ and of a shift $\delta$ in a probe state $\rho$ of rank $r$ as
 \begin{equation}
     \begin{split}
     \rho_\theta&=W(\theta)\rho W\ad(\theta)\,,\\ \rho_{\theta+\delta}&=W(\theta+\delta)\rho W\ad(\theta+\delta)\,,
 \end{split}
 \end{equation}
 with 
 \begin{equation}
    W(\theta)=e^{-i\theta G}\,,
 \end{equation}
where $G$ is a $\theta$-independent Hermitian operator.
Given the spectral decomposition of the exact state $\rho_\theta=\sum_i\lambda_i\dya{\lambda_i(\theta)}$ with  $\ket{\lambda_i(\theta)}=W(\theta)\ket{\lambda_i}$, we define the operator that projects onto the Hilbert space spanned by the eigenvectors corresponding to the $m$-largest eigenvalues of $\rho_\theta$ as $  \Pi_{\rho_\theta}^m=\sum_{i=1}^m \dya{\lambda_i(\theta)}$. Then, we define the truncated (subnormalized) states
\begin{align}\label{eq:truncation}
\begin{split}
    \rho^{(m)}_{\theta}&=\Pi_{\rho_\theta}^m\rho_\theta \Pi_{\rho_\theta}^m=\sum_{i=1}^m\lambda_i\dya{\lambda_i(\theta)}\,,\\ \rho^{(m)}_{\theta+\delta}&=\Pi_{\rho_\theta}^m\rho_{\theta+\delta} \Pi_{\rho_\theta}^m\,.
    \end{split}
\end{align}
Finally, we have the following definition for the TQFI.
\begin{definition}[Truncated quantum Fisher information]\label{def:TQFI}
Given an exact state $\rho_\theta$ and error state $\rho_{\theta+\delta}$ in $\SC_{=}(\HC)$,  let $\rho^{(m)}_{\theta}$ and $\rho^{(m)}_{\theta+\delta}$ denote their truncated versions according to~\eqref{eq:truncation} such that $\rho^{(m)}_{\theta},\rho^{(m)}_{\theta+\delta}\in\SC_{\leq}(\HC)$. The truncated quantum Fisher information is
\begin{equation}
        \mathcal{I}_*(\theta;\rho^{(m)}_{\theta})=8\lim_{\delta\to0}\frac{1-F_*(\rho^{(m)}_{\theta},\rho^{(m)}_{\theta+\delta})}{\delta^2}\,.
\label{eq:TQFIandGeneralizedFidelity}
\end{equation}
\end{definition}

 \subsection{TQFI as a lower bound}

From Eq.~\eqref{eq:upperbound} we have that the following lemma holds.

\begin{lemma}\label{lem:bound}
The TQFI of Definition~\ref{def:TQFI} is a lower bound for the QFI
\begin{equation}\label{eq:lowerbound}
      \mathcal{I}_*(\theta;\rho^{(m)}_{\theta})\leq   I(\theta;\rho_{\theta})\,,
\end{equation}
where $I(\theta;\rho_{\theta})$ is the QFI  defined in~\eqref{eq:qfifid}. 
In addition, the TQFI is monotonically increasing with $m$, i.e.,
\begin{equation}\label{eq:lowerbound2}
      \mathcal{I}_*(\theta;\rho^{(m)}_{\theta})\leq    \mathcal{I}_*(\theta;\rho^{(m+1)}_{\theta})\,,
\end{equation}
with the equality in~\eqref{eq:lowerbound} holding if $m=r$, where $r=\rank(\rho)$.
\end{lemma}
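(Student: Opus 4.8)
The plan is to reduce all three claims to pointwise (in $\delta$) statements relating the standard fidelity $F(\rho_\theta,\rho_{\theta+\delta})$ to the generalized fidelity $F_*(\rho^{(m)}_\theta,\rho^{(m)}_{\theta+\delta})$, and then pass these through the defining $\delta\to0$ limits of Eqs.~\eqref{eq:qfifid} and~\eqref{eq:TQFIandGeneralizedFidelity}. The three ingredients I need are already available in the excerpt: the projective upper bound of Eq.~\eqref{eq:upperbound}, the fact that it tightens monotonically in $m$, and that it is saturated at $m=\rank(\rho_\theta)$. Since $\delta^2>0$, each fidelity inequality survives division by $\delta^2$ and, being non-strict, survives the limit; the only genuine subtlety is the existence of the TQFI limit itself, which I address last.

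For the lower bound~\eqref{eq:lowerbound}, I would instantiate Eq.~\eqref{eq:upperbound} with $\Pi=\Pi^m_{\rho_\theta}$, $\rho=\rho_\theta$, and $\widetilde{\rho}=\rho_{\theta+\delta}$. Because $\Pi^m_{\rho_\theta}\rho_\theta\Pi^m_{\rho_\theta}=\rho^{(m)}_\theta$ and $\Pi^m_{\rho_\theta}\rho_{\theta+\delta}\Pi^m_{\rho_\theta}=\rho^{(m)}_{\theta+\delta}$, this reads $F(\rho_\theta,\rho_{\theta+\delta})\leq F_*(\rho^{(m)}_\theta,\rho^{(m)}_{\theta+\delta})$ for every $\delta$. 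Subtracting from $1$ reverses the inequality, and dividing by $\delta^2$ and multiplying by $8$ before letting $\delta\to0$ yields exactly $\mathcal{I}_*(\theta;\rho^{(m)}_\theta)\leq I(\theta;\rho_\theta)$. Monotonicity in $m$~\eqref{eq:lowerbound2} is identical in structure: the stated monotone tightening of the bound gives $F_*(\rho^{(m+1)}_\theta,\rho^{(m+1)}_{\theta+\delta})\leq F_*(\rho^{(m)}_\theta,\rho^{(m)}_{\theta+\delta})$ at fixed $\delta$, and the same arithmetic converts the decrease of $F_*$ with $m$ into the increase of $\mathcal{I}_*$ with $m$.

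For the saturation at $m=r$, I first observe that $r=\rank(\rho_\theta)=\rank(\rho)$ since $W(\theta)$ is unitary. When $m=r$ the projector $\Pi^r_{\rho_\theta}$ is the projector onto the full support of $\rho_\theta$, so $\rho^{(r)}_\theta=\rho_\theta$ is normalized and the slack term $\sqrt{(1-\Tr[\rho^{(r)}_\theta])(1-\Tr[\rho^{(r)}_{\theta+\delta}])}$ in $F_*$ vanishes. It then remains to verify that truncating only the second argument does not change the fidelity, which follows from $\sqrt{\rho_\theta}=\Pi^r_{\rho_\theta}\sqrt{\rho_\theta}=\sqrt{\rho_\theta}\Pi^r_{\rho_\theta}$ together with the identity $F(\rho_\theta,\rho_{\theta+\delta})=\Tr\big[\sqrt{\sqrt{\rho_\theta}\,\rho_{\theta+\delta}\,\sqrt{\rho_\theta}}\big]$: inserting $\Pi^r_{\rho_\theta}$ around $\rho_{\theta+\delta}$ leaves the argument of the square root unchanged, so $F(\rho_\theta,\rho_{\theta+\delta})=F(\rho_\theta,\rho^{(r)}_{\theta+\delta})$. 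This is equivalently the saturation case of Eq.~\eqref{eq:upperbound} cited from Ref.~\cite{cerezo2019variational}. Hence $F_*(\rho^{(r)}_\theta,\rho^{(r)}_{\theta+\delta})=F(\rho_\theta,\rho_{\theta+\delta})$ for all $\delta$, and the two limits in~\eqref{eq:qfifid} and~\eqref{eq:TQFIandGeneralizedFidelity} coincide term by term, giving $\mathcal{I}_*(\theta;\rho^{(r)}_\theta)=I(\theta;\rho_\theta)$.

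The step I expect to require the most care is justifying that the $\delta\to0$ limit defining $\mathcal{I}_*$ exists, so that the pointwise inequalities may legitimately be taken to the limit. Here I would use that, for the unitary family, $\Pi^m_{\rho_\theta}=W(\theta)\Pi^m_\rho W\ad(\theta)$ is fixed once $\theta$ is fixed (the eigenvalues $\lambda_i$ are $\theta$-independent), so $\rho^{(m)}_{\theta+\delta}=\Pi^m_{\rho_\theta}\,\rho_{\theta+\delta}\,\Pi^m_{\rho_\theta}$ depends smoothly on $\delta$ through $W(\theta+\delta)$ alone, with no eigenvector-crossing ambiguity in the $\delta$-dependence. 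Embedding each sub-normalized state as a normalized state on $\HC\oplus\mathbb{C}$ shows $F_*\leq1$ with $F_*(\rho^{(m)}_\theta,\rho^{(m)}_\theta)=1$, so $\delta=0$ is a global maximizer of $\delta\mapsto F_*(\rho^{(m)}_\theta,\rho^{(m)}_{\theta+\delta})$; consequently its first-order term in $\delta$ vanishes and the leading behaviour is quadratic, which is precisely what renders $\lim_{\delta\to0}(1-F_*)/\delta^2$ finite and well defined. With the limit secured, the inequalities of the previous paragraphs pass through unchanged.
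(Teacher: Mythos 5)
Your proposal follows essentially the same route as the paper's proof: both reduce the lower bound \eqref{eq:lowerbound} to the pointwise-in-$\delta$ inequality $F(\rho_{\theta},\rho_{\theta+\delta})\leq F_*(\rho^{(m)}_{\theta},\rho^{(m)}_{\theta+\delta})$ obtained from Eq.~\eqref{eq:upperbound}, divide by $\delta^2$ and take $\delta\to 0$, and both obtain \eqref{eq:lowerbound2} from the monotone tightening of the generalized fidelity with $m$. Your extra steps---explicitly verifying the $m=r$ saturation via the support projector and arguing that the defining limit exists---are correct and only supply details that the paper instead delegates to Ref.~\cite{cerezo2019variational} and to the explicit expansion in its proof of Lemma~\ref{lemma:TQFIexpression}.
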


Lemma~\ref{lem:bound} provides an operational meaning of the TQFI as a lower bound on the standard QFI.  We remark that since the generalized fidelity is a tight bound for high purity states, the TQFI will also be a tight bound on the QFI in this case.

\subsection{Computation of TQFI}

Let us briefly discuss how the TQFI can be computed. We refer the reader to our work~\cite{Beckey2020} for a much more detailed description of the estimation of TQFI. As previously mentioned, the generalized fidelity can be efficiently computed for $m\in O(\poly(\log(d)))$ via a variational hybrid quantum-classical algorithm~\cite{cerezo2020VQAreview} called the variational quantum fidelity estimation algorithm in Ref.~\cite{cerezo2020variationalfidelity}, which uses state diagonalization as a subroutine~\cite{VQSD,cerezo2020variationaleigensolver}. Assuming this state {diagonalization} subroutine is efficient, it follows that one can efficiently approximate the TQFI and lower bound the QFI by using the algorithm in Ref.~\cite{cerezo2020variationalfidelity} and computing $\left(1-F_*(\rho^{(m)}_{\theta},\rho^{(m)}_{\theta+\delta})\right)/\delta^2$ for small $\delta$.

\subsection{Properties of the TQFI}

To further understand the meaning of the TQFI, it is useful to express this quantity in the representation of the eigenbasis of $\rho$.
\begin{lemma}
\label{lemma:TQFIexpression}
The TQFI of Definition~\ref{def:TQFI} can be written as 
\begin{align}
 \mathcal{I}_*(\theta;\rho^{(m)}_{\theta}) =&
    4\sum_{i,j=1}^m \lambda_i|G_{ij}|^2-8\sum_{i,j=1}^m\frac{\lambda_i\lambda_j}{\lambda_i+\lambda_j}|G_{ij}|^2
    \label{eq:TQFIexpression}
\end{align}
where $G_{ij}=\bramatket{\lambda_i}{G}{\lambda_j}$, and where we recall that $\lambda_i=0$ for $i>r$.
\end{lemma}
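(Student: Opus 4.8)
The plan is to Taylor-expand the generalized fidelity $F_*(\rho^{(m)}_{\theta},\rho^{(m)}_{\theta+\delta})$ to second order in $\delta$ and substitute the result into Definition~\ref{def:TQFI}. The first simplification exploits the group property $W(\theta+\delta)=W(\theta)W(\delta)$: since $\ket{\lambda_i(\theta)}=W(\theta)\ket{\lambda_i}$, the matrix elements entering $T$ in Eq.~\eqref{eq:gen-fidelity} collapse to $\bramatket{\lambda_i(\theta)}{\rho_{\theta+\delta}}{\lambda_j(\theta)}=\bramatket{\lambda_i}{W(\delta)\rho W\ad(\delta)}{\lambda_j}$, eliminating all $\theta$-dependence. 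Expanding $W(\delta)=e^{-i\delta G}$ then gives
\[
W(\delta)\rho W\ad(\delta)=\rho-i\delta[G,\rho]+\delta^2\left(G\rho G-\tfrac12\{G^2,\rho\}\right)+O(\delta^3)\,,
\]
whose $\rho$-eigenbasis matrix elements are read off directly in terms of $\lambda_i$ and $G_{ij}$; in particular $\bramatket{\lambda_i}{[G,\rho]}{\lambda_j}=(\lambda_j-\lambda_i)G_{ij}$, while the $\delta^2$ piece produces $\sum_k(\lambda_k-\tfrac{\lambda_i+\lambda_j}{2})G_{ik}G_{kj}$ with $k$ running over the \emph{full} spectrum (recall $\lambda_k=0$ for $k>r$).

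Writing $T=T^{(0)}+\delta T^{(1)}+\delta^2 T^{(2)}$, where $T^{(0)}$ is diagonal with entries $\lambda_i^2$, the main obstacle is evaluating $\Tr[\sqrt{T}]$ to $O(\delta^2)$, i.e.\ the second-order perturbation of a matrix square root. I would handle this with the divided-difference (Daleckii--Krein) formula: for $f(x)=\sqrt{x}$ and $M=A+\delta B+\delta^2 C$ with $A$ diagonal (eigenvalues $a_i=\lambda_i^2$),
\begin{align*}
\Tr[\sqrt{M}]&=\Tr[\sqrt{A}]+\delta\,\Tr[f'(A)B]\\
&\quad+\frac{\delta^2}{2}\left(\sum_{i,j}\frac{f'(a_i)-f'(a_j)}{a_i-a_j}B_{ij}B_{ji}+2\,\Tr[f'(A)C]\right)+O(\delta^3)\,.
\end{align*}
With $f'(a_i)=\tfrac{1}{2\lambda_i}$ (restricting to the support where $\lambda_i>0$; zero-eigenvalue indices contribute nothing to the final sum), the linear term vanishes because $T^{(1)}$ has zero diagonal. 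Degeneracies $\lambda_i=\lambda_j$ are harmless since $T^{(1)}_{ij}\propto(\lambda_i-\lambda_j)$ kills those terms. Substituting $B_{ij}B_{ji}=\lambda_i\lambda_j(\lambda_i-\lambda_j)^2|G_{ij}|^2$ (using $G_{ji}=\overline{G_{ij}}$) and the divided difference $\tfrac{f'(a_i)-f'(a_j)}{a_i-a_j}=-\tfrac{1}{2\lambda_i\lambda_j(\lambda_i+\lambda_j)}$ turns the first sum into $-\tfrac12\sum_{i,j=1}^m\tfrac{(\lambda_i-\lambda_j)^2}{\lambda_i+\lambda_j}|G_{ij}|^2$.

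The remaining ingredient is the sub-normalization term $\sqrt{(1-\Tr[\rho^{(m)}_\theta])(1-\Tr[\rho^{(m)}_{\theta+\delta}])}$. Here $\Tr[\rho^{(m)}_\theta]$ is $\delta$-independent, while $\Tr[\rho^{(m)}_{\theta+\delta}]=\sum_{i=1}^m\lambda_i+\delta^2 Q+O(\delta^3)$ with $Q=\sum_{i=1}^m\sum_k(\lambda_k-\lambda_i)|G_{ik}|^2$. Expanding the square root, this term contributes $-\tfrac{\delta^2}{2}Q$ at second order, which \emph{exactly cancels} the $2\,\Tr[f'(A)C]=Q$ leakage contribution (the $G\rho G$ terms with $k$ running over the full Hilbert space) from $\Tr[\sqrt{T}]$. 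Verifying this cancellation is the key consistency check of the calculation. What survives is
\[
F_*(\rho^{(m)}_{\theta},\rho^{(m)}_{\theta+\delta})=1-\frac{\delta^2}{4}\sum_{i,j=1}^m\frac{(\lambda_i-\lambda_j)^2}{\lambda_i+\lambda_j}|G_{ij}|^2+O(\delta^3)\,.
\]

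Finally, inserting this into Definition~\ref{def:TQFI} yields $\mathcal{I}_*(\theta;\rho^{(m)}_{\theta})=2\sum_{i,j=1}^m\tfrac{(\lambda_i-\lambda_j)^2}{\lambda_i+\lambda_j}|G_{ij}|^2$. To match the stated form I would use the index symmetry $|G_{ij}|^2=|G_{ji}|^2$ to write $4\sum_{i,j}\lambda_i|G_{ij}|^2=2\sum_{i,j}(\lambda_i+\lambda_j)|G_{ij}|^2$, and then the algebraic identity $(\lambda_i-\lambda_j)^2=(\lambda_i+\lambda_j)^2-4\lambda_i\lambda_j$ to split $2\sum_{i,j}\tfrac{(\lambda_i-\lambda_j)^2}{\lambda_i+\lambda_j}|G_{ij}|^2$ into the two terms of Eq.~\eqref{eq:TQFIexpression}. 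The principal difficulty is the second-order matrix-square-root expansion and the bookkeeping needed to confirm that the full-spectrum leakage terms cancel against the sub-normalization contribution.
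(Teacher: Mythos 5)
Your proposal is correct, and it reproduces the paper's overall strategy: use the group property to reduce everything to a $\delta$-expansion of $T_{ij}=\sqrt{\lambda_i\lambda_j}\bramatket{\lambda_i}{e^{-i\delta G}\rho\, e^{i\delta G}}{\lambda_j}$, expand both $\Tr[\sqrt{T}]$ and the sub-normalization factor to second order, and verify that the support-leakage terms cancel between them, leaving $F_*=1-\tfrac{\delta^2}{4}\sum_{i,j=1}^m\tfrac{(\lambda_i-\lambda_j)^2}{\lambda_i+\lambda_j}|G_{ij}|^2+\OC(\delta^3)$. Where you genuinely differ is the central technical step, the $\OC(\delta^2)$ expansion of $\Tr[\sqrt{T}]$: the paper posits $X=\sum_k\delta^k X^{(k)}$ with $X^2=T$ and solves the resulting anticommutator equations for the full matrices $X^{(1)}$ and $X^{(2)}$ before tracing, whereas you invoke the Daleckii--Krein divided-difference formula (equivalently, second-order eigenvalue perturbation theory applied to $\sum_k\sqrt{\mu_k}$), which requires only traces. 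Your route is leaner: the divided difference $-1/\bigl(2\lambda_i\lambda_j(\lambda_i+\lambda_j)\bigr)$ combined with $|T^{(1)}_{ij}|^2=\lambda_i\lambda_j(\lambda_i-\lambda_j)^2|G_{ij}|^2$ lands directly on the symmetric combination $(\lambda_i-\lambda_j)^2/(\lambda_i+\lambda_j)$, bypassing the paper's explicit computation of $X^{(2)}$ and its final symmetrization identity; and you correctly flag the two places where such spectral formulas need care (degeneracies, harmless because $T^{(1)}_{ij}\propto(\lambda_i-\lambda_j)$; zero eigenvalues, which give identically vanishing rows and columns of $T$). What the paper's more pedestrian method buys is elementarity --- it uses nothing beyond uniqueness of the positive square root --- and it produces the perturbed operator $\sqrt{T}$ itself rather than only its trace, at the cost of heavier bookkeeping. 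One caveat, which you share with the paper rather than introduce: expanding the sub-normalization factor as $\bigl(1-\sum_{i\le m}\lambda_i\bigr)-\tfrac{\delta^2}{2}Q$ tacitly assumes $\sum_{i\le m}\lambda_i<1$, i.e.\ $m<r$; when $m=r$ that factor vanishes identically, the $Q$-cancellation you emphasize does not occur, and the TQFI instead reproduces the full QFI (consistent with the equality case of Lemma~\ref{lem:bound}), which contains cross terms with $j>m$ absent from Eq.~\eqref{eq:TQFIexpression}. So both your argument and the paper's really establish the lemma in the strictly sub-normalized regime.
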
 

Recalling that the standard QFI can be expanded in the eigenbasis of $\rho$ as~\cite{toth2014quantum}
\begin{equation}
\label{eq:QFIexpression}
I(\theta;\rho_{\theta}) =
    4\sum_{i,j=1}^d \lambda_i|G_{ij}|^2-8\sum_{i,j=1}^d\frac{\lambda_i\lambda_j}{\lambda_i+\lambda_j}|G_{ij}|^2
\end{equation}
with again $\lambda_{i,j}=0$ for $i,j>r$,  we can see that the first two terms in~\eqref{eq:TQFIexpression} are simply obtained by truncating the summations of~\eqref{eq:QFIexpression} so that $i,j=1,\cdots, m$; and while this may seem like the natural way to generalize the QFI to subnormalized states, the derivation of Eq. \eqref{eq:TQFIexpression} and the proofs of the properties required for it to satisfy the canonical criteria of a QFI measure are non-trivial. Before listing these important properties, let us consider an alternative definition for the TQFI by introducing the truncated symmetric logarithmic derivative (TSLD).
\begin{definition}[Truncated symmetric logarithmic derivative operator]\label{def:nTSLD}
Given a subnormalized truncated exact state $\rho^{(m)}_{\theta}\in\SC_{\leq}(\HC)$ defined according to~\eqref{eq:truncation}, the TQFI of Definition~\ref{def:TQFI} can be expressed as 
\begin{align}
    \mathcal{I}_*(\theta;\rho^{(m)}_{\theta})=\Tr\left[ L_{\theta}^2 \rho^{(m)}_{\theta}\right]\,,
\label{eq:truncatednSLDmain}
\end{align}
where 
\begin{align}\label{eq:nTSLD}
     L_{\theta}  = 2\sum_{i,j=1}^{m} \frac{\bra{\lambda_i(\theta)}\partial_{\theta}\rho_{\theta}^{(m)}\ket{\lambda_j(\theta)}}{\lambda_{i}+\lambda_{j}}\ket{\lambda_{i}(\theta)}\bra{\lambda_{j}(\theta)}\,
\end{align}
is the TSLD operator. For the unitary families, we particularly have
\begin{align}\label{eq:nTSLDunitary}
L_{\theta}  = 2i\sum_{i,j=1}^{m} \frac{\lambda_{i}-\lambda_{j}}{\lambda_{i}+\lambda_{j}}\bra{\lambda_{i}}G\ket{\lambda_{j}}\ket{\lambda_{i}(\theta)}\bra{\lambda_{j}(\theta)}\,.
\end{align}
\end{definition}
As we can see, the TSLD is simply obtained by truncating the summation of the SLD operator.

From the previous definitions and lemmas we can derive the following properties of the TQFI.
\begin{lemma}
\label{lemma:TQFIproperty}
From the definition of the TQFI,  for the unitary families $\rho_{\theta}=W(\theta)\rho W\ad(\theta)$, $\mathcal{I}_{*}(\theta,\rho^{(m)}_{\theta})$ satisfies the following properties:
\begin{itemize}
    \item Invariance under unitary transformations. Given a truncated subnormalized state $\rho^{(m)}_{\theta}\in \SC_{\leq}(\HC)$, for any $\theta$-independent unitary $V$ in the unitary group $U(d)$ of degree $d$, we have
    \begin{align}
    \mathcal{I}_{*}(\theta,V\rho^{(m)}_{\theta}V\ad)=\mathcal{I}_{*}(\theta,\rho^{(m)}_{\theta})\,.
    \end{align}
        \item Convexity. For two truncated subnormalized states $\rho^{(m)}_{\theta},\xi^{(m')}_{\theta}\in \SC_{\leq}(\HC)$ with $\rho^{(m)}_{\theta}=\Pi_{\rho_\theta}^m\rho_\theta \Pi_{\rho_\theta}^m$ and $\xi^{(m')}_{\theta}=\Pi_{\xi_\theta}^{m'}\xi_\theta \Pi_{\xi_\theta}^{m'}$, with a real number $q\in[0,1]$, we have 
    \begin{equation}
    \begin{split}
    \mathcal{I}_{*}(\theta;&q\rho^{(m)}_{\theta}+(1-q)\xi^{(m')}_{\theta}  )\\
    &\leq q \mathcal{I}_{*}(\theta;\rho^{(m)}_{\theta})+(1-q)\mathcal{I}_{*}(\theta;\xi^{(m')}_{\theta})\,.
    \end{split}
    \end{equation}
    \item Monotonicity under CPTNI maps. Given a truncated subnormalized state $\rho^{(m)}_{\theta}\in \SC_{\leq}(\HC)$,  and  a CPTNI map $\Phi$, we have 
    \begin{align}
    \mathcal{I}_{*}(\theta,\Phi(\rho^{(m)}_{\theta}))\leq \mathcal{I}_{*}(\theta,\rho^{(m)}_{\theta})\,.
    \end{align}
    \item Subadditivity for product of truncated states. Given a product of truncated states $\sigma=\bigotimes_{k}\rho^{(m_k)}_{k,\theta}$, where $\rho^{(m_k)}_{k,\theta}=\Pi_{\rho_{k,\theta}}^{m_k}\rho_{k,\theta} \Pi_{\rho_{k,\theta}}^{m_k}$, then we have 
    \begin{align}
    \mathcal{I}_{*}(\theta;\sigma)\leq\sum_{k}\mathcal{I}_{*}(\theta;\rho^{(m_k)}_{k,\theta})\,.
    \end{align}
    \item Additivity for direct sum of truncated states. Given a direct sum of truncated states $\sigma=\bigoplus_{k}\mu_{k}\rho^{(m)}_{k,\theta}$, where $\rho^{(m_k)}_{k,\theta}=\Pi_{\rho_{k,\theta}}^{m_k}\rho_{k,\theta} \Pi_{\rho_{k,\theta}}^{m_k}$, and where $\mu_k$ are $\theta$-independent coefficients such that $0<\sum_{k}\mu_{k}\leq 1$, we have 
    \begin{align}
    \mathcal{I}_{*}(\theta;\sigma ) =\sum_{k}\mu_{k}\mathcal{I}_{*}(\theta;\rho^{(m_k)}_{k,\theta})\,.
    \end{align}
\end{itemize}
\end{lemma}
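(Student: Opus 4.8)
The plan is to split the five properties into two groups according to which representation of the TQFI is most convenient. The first group---invariance, convexity, and monotonicity under CPTNI maps---I would establish directly from Definition~\ref{def:TQFI}, inheriting each property from the corresponding property of the generalized fidelity $F_*$ recalled in Sec.~\ref{sec:background}. The second group---sub-additivity for products and additivity for direct sums---I would instead attack through the closed form of Lemma~\ref{lemma:TQFIexpression}, which I would first recast in the more symmetric shape
\begin{equation}
\mathcal{I}_*(\theta;\rho^{(m)}_{\theta}) = 2\sum_{i,j=1}^m \frac{(\lambda_i-\lambda_j)^2}{\lambda_i+\lambda_j}|G_{ij}|^2\,,
\end{equation}
obtained by symmetrizing the first sum in~\eqref{eq:TQFIexpression} under $i\leftrightarrow j$ and combining it with the second (terms with $\lambda_i+\lambda_j=0$ set to zero). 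This form makes manifest that each summand is homogeneous of degree one in the eigenvalues, which is the structural fact driving the last two properties.

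For invariance I would note that conjugation by a fixed unitary $V$ produces the family with exact state $V\rho^{(m)}_\theta V\ad$ and error state $V\rho^{(m)}_{\theta+\delta}V\ad$; unitary invariance of $F_*$ gives $F_*(V\rho^{(m)}_\theta V\ad, V\rho^{(m)}_{\theta+\delta}V\ad)=F_*(\rho^{(m)}_\theta,\rho^{(m)}_{\theta+\delta})$ for every $\delta$, and passing to the limit in~\eqref{eq:TQFIandGeneralizedFidelity} yields the claim. For convexity I would apply concavity of $F_*$ to the pair of mixtures $(q\rho^{(m)}_\theta+(1-q)\xi^{(m')}_\theta,\; q\rho^{(m)}_{\theta+\delta}+(1-q)\xi^{(m')}_{\theta+\delta})$; concavity turns into $1-F_*\leq q(1-F_*(\rho^{(m)}_\theta,\rho^{(m)}_{\theta+\delta}))+(1-q)(1-F_*(\xi^{(m')}_\theta,\xi^{(m')}_{\theta+\delta}))$, and dividing by $\delta^2$ and taking $\delta\to0$ gives convexity of the TQFI (the direction flips precisely because the TQFI is built from $1-F_*$). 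For monotonicity I would use that a $\theta$-independent CPTNI map $\Phi$ sends the exact/error pair to $(\Phi(\rho^{(m)}_\theta),\Phi(\rho^{(m)}_{\theta+\delta}))$, invoke $F_*(\Phi(\cdot),\Phi(\cdot))\geq F_*(\cdot,\cdot)$, and take the limit as before.

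For direct-sum additivity I would use that $\sigma=\bigoplus_k\mu_k\rho^{(m_k)}_{k,\theta}$ has eigenvalues $\mu_k\lambda^{(k)}_i$ and, with the block-diagonal generator $G=\bigoplus_k G_k$, vanishing matrix elements between distinct blocks; the symmetric form then splits into a sum over $k$, and degree-one homogeneity gives $\frac{(\mu_k\lambda^{(k)}_i-\mu_k\lambda^{(k)}_j)^2}{\mu_k\lambda^{(k)}_i+\mu_k\lambda^{(k)}_j}=\mu_k\frac{(\lambda^{(k)}_i-\lambda^{(k)}_j)^2}{\lambda^{(k)}_i+\lambda^{(k)}_j}$, yielding $\mathcal{I}_*(\theta;\sigma)=\sum_k\mu_k\mathcal{I}_*(\theta;\rho^{(m_k)}_{k,\theta})$. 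For product sub-additivity it suffices to treat two factors and induct. Writing $\sigma=\rho^{(m_1)}_{1,\theta}\otimes\rho^{(m_2)}_{2,\theta}$ with the additive generator $G=G_1\otimes\id+\id\otimes G_2$, the element $G_{(i,j),(k,l)}=(G_1)_{ik}\delta_{jl}+\delta_{ik}(G_2)_{jl}$ is nonzero only when $j=l$ or $i=k$. Splitting accordingly: the $j=l,\,i\neq k$ terms collapse, via $\frac{(\lambda^{(2)}_j)^2(\lambda^{(1)}_i-\lambda^{(1)}_k)^2}{\lambda^{(2)}_j(\lambda^{(1)}_i+\lambda^{(1)}_k)}=\lambda^{(2)}_j\frac{(\lambda^{(1)}_i-\lambda^{(1)}_k)^2}{\lambda^{(1)}_i+\lambda^{(1)}_k}$, to $\Tr[\rho^{(m_2)}_{2,\theta}]\,\mathcal{I}_*(\theta;\rho^{(m_1)}_{1,\theta})$; the symmetric $i=k,\,j\neq l$ terms give $\Tr[\rho^{(m_1)}_{1,\theta}]\,\mathcal{I}_*(\theta;\rho^{(m_2)}_{2,\theta})$; and the diagonal $i=k,\,j=l$ terms vanish since their numerator is zero. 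Sub-normalization $\Tr[\rho^{(m_k)}_{k,\theta}]\leq1$ then converts the two prefactors into the stated inequality, recovering exact additivity in the untruncated case.

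The main obstacle I anticipate is the product case: getting the bookkeeping of the tensor-product generator correct and verifying that the diagonal cross terms genuinely drop out, so that the entire deviation from additivity is carried by the factors $\Tr[\rho^{(m_k)}_{k,\theta}]\leq1$. A secondary point requiring care across the first group is justifying that the finite-$\delta$ inequalities for $F_*$ survive the $\delta\to0$ limit---which they do, being non-strict inequalities between quantities whose limits exist by Definition~\ref{def:TQFI}---together with a clean specification of the error state for the mixed and mapped families as the mixture or image of the individual error states, with the truncation projectors frozen at $\theta$.
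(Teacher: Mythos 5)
Your proof is correct, and for the first three properties (invariance, convexity, monotonicity under CPTNI maps) it is exactly the paper's argument: pull the corresponding property of $F_*$ through the finite-$\delta$ quotient $8\left(1-F_*\right)/\delta^2$ and pass to the limit. The divergence is in the last two properties. The paper's primary route there is operator-theoretic: it uses the truncated SLD of Definition~\ref{def:nTSLD} together with $\mathcal{I}_*(\theta;\tau_\theta)=\Tr[L_\theta^2\tau_\theta]$, and then computes the TSLD of the composite state, finding $L_\theta=\sum_k L_{k,\theta}\otimes\id_{\overline{k}}$ for a tensor product (so that $\Tr[L_\theta^2\tau_\theta]=\sum_k\mathcal{A}_k\Tr[L_{k,\theta}^2\tau_\theta^{(k)}]$ with $\mathcal{A}_k=\prod_{j\neq k}\Tr[\tau_\theta^{(j)}]\leq 1$, the cross terms dropping because $\Tr[L_{k,\theta}\tau_\theta^{(k)}]=0$) and $L_\theta=\bigoplus_k L_{k,\theta}$ for a direct sum. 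You instead work with the symmetric closed form $\mathcal{I}_*=2\sum_{i,j}\frac{(\lambda_i-\lambda_j)^2}{\lambda_i+\lambda_j}|G_{ij}|^2$ and exploit its degree-one homogeneity in the eigenvalues. For the product case your computation coincides, essentially verbatim, with the ``alternative proof'' the paper itself appends at the end of Appendix~\ref{app:property} (Kronecker-sum generator, diagonal cross terms killed by the vanishing numerator, trace factors bounded by one); for the direct sum your homogeneity argument replaces the paper's TSLD computation but arrives at the same identity, under the same implicit assumption that the generator is block diagonal. What the TSLD route buys is structure: it parallels the standard SLD proofs for the QFI and makes the cross-term cancellation a consequence of $\Tr[L_{k,\theta}\tau_\theta^{(k)}]=0$. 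What your route buys is economy: everything reduces to elementary manipulations of the eigenvalue formula of Lemma~\ref{lemma:TQFIexpression}, at the cost of being tied to the explicit unitary-family expression --- which is no real restriction, since the lemma is stated only for unitary families.
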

Note that the TQFI satisfies the same properties as those that the standard QFI satisfies (see Ref.~\cite{Jing20} for a review of the properties of the QFI), except for the additivity for product of states. Here, the TQFI satisfies instead a subadditivity property which naturally follows from the fact that the states are subnormalized.

Let us finally discuss the geometric interpretation of the TQFI. From Eq.~\eqref{eq:TQFIandGeneralizedFidelity} we first define the \textit{generalized Bures distance}.  
\begin{definition}[Generalized Bures Distance]
\label{def:GBD}
Given two subnormalized states $\tau,\sigma\in\SC_{\leq}(\HC)$, the generalized Bures distance is
\begin{align}
        B_{*}^{2}(\tau,\sigma)=2(1-F_{*}(\tau,\sigma))\,.
\label{eq:GeneralizedBuresDistance}
\end{align}
\end{definition}
Here we remark that the generalized Bures distance is closely related to the purified distance for subnormalized states introduced in Refs.~\cite{tomamichel2010duality,tomamichel2015quantum}. Hence, the following lemma holds. 
\begin{lemma}
\label{lemma:Bures}
Given two subnormalized states $\tau,\sigma\in\SC_{\leq}(\HC)$, the generalized Bures distance $B_{*}^{2}(\tau,\sigma)$ is a distance metric on the space of subnormalized states. 
\end{lemma}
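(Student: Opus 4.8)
The plan is to show that $B_*(\tau,\sigma) = \sqrt{2(1-F_*(\tau,\sigma))}$ satisfies the three axioms of a metric: nonnegativity with identity of indiscernibles, symmetry, and the triangle inequality. Symmetry is immediate since $F_*(\tau,\sigma) = F_*(\sigma,\tau)$ follows directly from the defining formula (the trace-norm term and the geometric-mean term are both symmetric in their arguments). For nonnegativity and the identity of indiscernibles, I would first establish the bounds $0 \leq F_*(\tau,\sigma) \leq 1$ for all $\tau,\sigma \in \SC_{\leq}(\HC)$, with $F_*(\tau,\sigma)=1$ if and only if $\tau=\sigma$. The upper bound $F_*\leq 1$ and the equality condition are exactly the content that makes the associated \emph{purified distance} a metric, as established in Refs.~\cite{tomamichel2010duality,tomamichel2015quantum}; given $F_*\le1$, we have $B_*^2 \geq 0$, and $B_*^2(\tau,\sigma)=0 \iff F_*(\tau,\sigma)=1 \iff \tau=\sigma$.

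The main obstacle is the triangle inequality, which is the only nontrivial axiom. My approach is to leverage the connection to the purified distance noted in the remark preceding the lemma. The purified distance is defined in Refs.~\cite{tomamichel2010duality,tomamichel2015quantum} as $P(\tau,\sigma)=\sqrt{1-F_*(\tau,\sigma)^2}$ and is \emph{known} to be a metric there, so in particular it satisfies the triangle inequality. The key observation is that $B_*$ and $P$ are both monotonic functions of $F_*$, and I want to transfer the triangle inequality from $P$ to $B_*$. Concretely, I would write $B_*^2 = 2(1-F_*)$ and relate it to $P^2 = 1-F_*^2 = (1-F_*)(1+F_*)$. Since $1 \leq 1+F_* \leq 2$ on the relevant range $F_*\in[0,1]$, we obtain the two-sided comparison
\begin{equation}
\tfrac{1}{2}\,B_*^2 \;\leq\; P^2 \;\leq\; B_*^2\,,
\end{equation}
i.e. $P \leq B_* \leq \sqrt{2}\,P$. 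However, a sandwich of this type does not by itself yield the triangle inequality for $B_*$, so a direct comparison argument is insufficient.

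The cleaner route is to prove the triangle inequality for $B_*$ directly from a Bures-angle / purification argument rather than deriving it from $P$. Writing $F_* = \cos\vartheta$ for an angle $\vartheta = \arccos F_* \in [0,\pi/2]$ (legitimate because $F_*\in[0,1]$), we have $B_* = \sqrt{2(1-\cos\vartheta)} = 2\sin(\vartheta/2)$, which is (twice the sine of half) the Bures angle. The standard fact—provable via Uhlmann's theorem extended to sub-normalized states, i.e. representing $F_*(\tau,\sigma) = \max_{|\psi_\tau\rangle,|\psi_\sigma\rangle}|\langle\psi_\tau|\psi_\sigma\rangle|$ as an overlap of (appropriately padded) purifications in an enlarged space—is that the Bures angle $\vartheta(\tau,\sigma)$ obeys the triangle inequality $\vartheta(\tau,\sigma)\leq\vartheta(\tau,\omega)+\vartheta(\omega,\sigma)$, since it is the great-circle (chordal-angle) distance between unit vectors on a sphere. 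The plan is therefore: (i) invoke the sub-normalized Uhlmann representation of $F_*$ from Refs.~\cite{tomamichel2010duality,tomamichel2015quantum}; (ii) conclude that $\vartheta=\arccos F_*$ satisfies the triangle inequality as an angular distance between purifying vectors; and (iii) observe that $B_* = 2\sin(\vartheta/2)$ is a chordal distance, which inherits the triangle inequality from $\vartheta$ because $x\mapsto 2\sin(x/2)$ is concave, increasing, and subadditive on $[0,\pi]$. I expect step (iii)—verifying subadditivity of $2\sin(x/2)$—to be the one genuine calculation, while step (ii) is where I lean most heavily on the cited prior work; this is the crux of the argument and the place I would be most careful to state the purification setup precisely.
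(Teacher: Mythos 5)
Your proposal is correct and follows essentially the same route as the paper: symmetry and the identity of indiscernibles are read off from the corresponding properties of $F_*$, and the triangle inequality is obtained by passing to the angular distance $\arccos F_*$ (whose triangle inequality is taken from Refs.~\cite{tomamichel2010duality,tomamichel2015quantum}) and using $B_* = 2\sin\bigl(\tfrac{1}{2}\arccos F_*\bigr)$ together with monotonicity and subadditivity of $x\mapsto 2\sin(x/2)$. The paper glosses over exactly the step you flag as the one genuine calculation (transferring the inequality from the angle to the chord), so your write-up is, if anything, slightly more careful on that point.
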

Then, for the truncated exact state$\rho^{(m)}_{\theta}$ defined in~\eqref{eq:truncation}, we can obtain the following result.
\begin{lemma}
\label{lemma:JandB}
Let $B_{*}^2 (\rho^{(m)}_{\theta},\rho^{(m)}_{\theta+\delta})$ be the generalized Bures distance. Then, for $\abs{\delta}\ll 1$, we have
\begin{align}
    B_{*}^2(\rho^{(m)}_{\theta},\rho^{(m)}_{\theta+\delta}) = \frac{1}{4}\mathcal{I}_{*}(\theta;\rho^{(m)}_{\theta})\delta^2+O(\delta^3)\,.
\end{align}
\end{lemma}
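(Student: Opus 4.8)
The plan is to obtain the claim from a second-order Taylor expansion of the generalized fidelity in $\delta$, reading off the quadratic coefficient directly from the definition of the TQFI. Writing $g(\delta):=F_*(\rho^{(m)}_{\theta},\rho^{(m)}_{\theta+\delta})$, Definition~\ref{def:GBD} reduces the statement to controlling $g$ near $\delta=0$, since $B_{*}^2(\rho^{(m)}_{\theta},\rho^{(m)}_{\theta+\delta})=2\bigl(1-g(\delta)\bigr)$. I would first record that $g(0)=1$: as $\delta\to0$ we have $\rho^{(m)}_{\theta+\delta}\to\rho^{(m)}_{\theta}$, and the self-fidelity identity $F_*(\tau,\tau)=\norm{\tau}_1+(1-\Tr[\tau])=1$ (valid since $\tau\geq0$ gives $\norm{\tau}_1=\Tr[\tau]$) applies to $\tau=\rho^{(m)}_{\theta}$.

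Next I would show that $g$ is $C^3$ (in fact real-analytic) in a neighborhood of $\delta=0$. Exploiting the unitary-family structure $W(\theta+\delta)=W(\theta)e^{-i\delta G}$, the matrix elements entering Eq.~\eqref{eq:gen-fidelity} are $\bramatket{\lambda_i(\theta)}{\rho_{\theta+\delta}}{\lambda_j(\theta)}=\bramatket{\lambda_i}{e^{-i\delta G}\rho\, e^{i\delta G}}{\lambda_j}$, which are entire functions of $\delta$; hence the $m\times m$ matrix $T=T(\delta)$ is analytic in $\delta$. At $\delta=0$ the matrix $T(0)$ has eigenvalues $\lambda_1^2,\dots,\lambda_m^2$, which are strictly positive whenever $m\leq r=\rank(\rho)$, so by continuity $T(\delta)$ remains positive definite for small $\delta$ and $\Tr[\sqrt{T(\delta)}]$ is analytic there. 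The second summand of $F_*$ is analytic as well, since its radicand $(1-\Tr[\rho^{(m)}_{\theta}])(1-\Tr[\rho^{(m)}_{\theta+\delta}])$ is bounded away from $0$ when $m<r$ (and the term vanishes identically when $m=r$, where $F_*$ reduces to the standard fidelity). Taylor's theorem then gives $g(\delta)=1+g'(0)\delta+\tfrac12 g''(0)\delta^2+\OC(\delta^3)$.

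I would then fix the coefficients using Definition~\ref{def:TQFI}. Lemma~\ref{lemma:TQFIexpression} guarantees that $\mathcal{I}_*(\theta;\rho^{(m)}_{\theta})=8\lim_{\delta\to0}(1-g(\delta))/\delta^2$ is finite; since $(1-g(\delta))/\delta^2=-g'(0)/\delta-\tfrac12 g''(0)+\OC(\delta)$, finiteness of the limit forces $g'(0)=0$ and yields $g''(0)=-\tfrac14\mathcal{I}_*(\theta;\rho^{(m)}_{\theta})$. Substituting back gives $1-g(\delta)=\tfrac18\mathcal{I}_*(\theta;\rho^{(m)}_{\theta})\delta^2+\OC(\delta^3)$, whence $B_{*}^2=2(1-g)=\tfrac14\mathcal{I}_*(\theta;\rho^{(m)}_{\theta})\delta^2+\OC(\delta^3)$, as claimed.

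The main obstacle is the smoothness step: the trace-norm/square-root structure of $F_*$ is analytic only where $T(\delta)$ is nondegenerate, so the argument genuinely requires truncating within the rank ($m\leq r$) so that $\lambda_1,\dots,\lambda_m>0$ and $T(0)$ is invertible on its support. I would emphasize that the construction sidesteps the usual difficulty of differentiating a truncation, because the projector $\Pi_{\rho_\theta}^m$ is fixed by the exact state $\rho_\theta$ and is itself $\delta$-independent; thus no eigenvector-crossing or level-degeneracy issues of the error state arise, and the only nonanalyticity to rule out is degeneracy of $T(\delta)$ at $\delta=0$, which $m\leq r$ precisely prevents. A secondary point is that the remainder is only claimed to be $\OC(\delta^3)$ rather than $\OC(\delta^4)$; this is the safe general bound requiring no cancellation of the cubic term, so $C^3$ regularity suffices.
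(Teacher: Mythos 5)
Your proposal is correct, and its core step is the same as the paper's: expand $2\bigl(1-F_*(\rho^{(m)}_{\theta},\rho^{(m)}_{\theta+\delta})\bigr)$ in powers of $\delta$ and use the finiteness of $\mathcal{I}_*(\theta;\rho^{(m)}_{\theta})=8\lim_{\delta\to0}(1-F_*)/\delta^2$ (guaranteed by Lemma~\ref{lemma:TQFIexpression}) to force the constant and linear coefficients to vanish and to identify the quadratic coefficient with $\tfrac{1}{8}\mathcal{I}_*$. The genuine difference is one of rigor: the paper's proof simply \emph{posits} that $B_*^2$ admits an expansion $\tfrac{b_0}{4}+\tfrac{b_1}{4}\delta+\tfrac{b_2}{4}\delta^2+\OC(\delta^3)$ with finite real coefficients, whereas you justify this by showing that $\delta\mapsto F_*(\rho^{(m)}_{\theta},\rho^{(m)}_{\theta+\delta})$ is analytic near $\delta=0$: the matrix $T(\delta)$ is entire in $\delta$ and equals $\mathrm{diag}(\lambda_1^2,\dots,\lambda_m^2)$ at $\delta=0$, so it stays positive definite for small $\delta$ and $\Tr[\sqrt{T(\delta)}]$ is analytic there, while the second term of $F_*$ is either analytic (radicand bounded away from zero for $m<r$) or identically zero ($m=r$). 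This fills what is arguably a gap in the paper's own argument, so your version is a strict improvement. One small correction: your claim that the argument ``genuinely requires'' $m\leq r$ is unnecessary --- if $m>r$, the rows and columns of $T$ with index $i>r$ vanish identically (since $\sqrt{\lambda_i\lambda_j}=0$ there), so $T(\delta)$ splits as an analytic positive-definite $r\times r$ block plus a zero block and the same reasoning applies; this matters because the lemma as stated places no restriction on $m$.
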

Lemma~\ref{lemma:JandB} provides a geometrical interpretation for the TQFI  as being related to the curvature of the generalized Bures distance in the space of subnormalized states. 

\bigskip

\section{Conclusion}
\label{sec:conclusion}
In conclusion, we have introduced the TQFI, which is demonstrated to be an efficiently computable lower bound on the quantum Fisher information. {This quantity can be used for estimating QFI and to prepare the optimal state for metrology via the variational quantum algorithms on the near-term quantum computers}. Specifically, the TQFI can be obtained from the generalized fidelity between the states obtained by projecting the exact state $\rho_\theta$ and error state $\rho_{\theta+\delta}$ onto the subspace spanned by the largest $m$ eigenvalues of $\rho_\theta$. For unitary families, we have proven that the TQFI satisfies the criteria of the quantum Fisher information for subnormalized states. In addition, we have revealed the geometrical interpretation of the TQFI by introducing a generalized Bures distance, a distance measure on subnormalized states. 

This lower bound can be employed to efficiently estimate the quantum Fisher information. This is especially useful in the context of quantum sensing, where one is interested in maximizing the quantum Fisher information. Hence, one can use our lower bound as a means to prepare states that maximize quantum Fisher information, to enhance sensing performance of the quantum sensors. Moreover, the quantum Fisher information is often used to witness metrologically useful entanglement~\cite{Hyllus12,Qin19}; therefore, an interesting future research direction will be exploring the use of TQFI for the entanglement witness in condensed matter systems.
\\
\\
\textit{Note.} {Recently, Yamagata ~\cite{Yamagata20} studied the properties of the quantum monotone metrics under the CPTNI maps. We note that TQFI belongs to this class.}

\section*{Acknowledgements}
This work was supported by the Quantum Science Center (QSC), a National Quantum Information Science Research Center of the U.S. Department of Energy (DOE). A.S. and M.C. also acknowledge initial support from the Center for Nonlinear Studies at Los Alamos National Laboratory (LANL). A.S. is now supported by the internal R\&D from Aliro Technologies, Inc. J.L.B. was initially supported by the U.S. DOE through a quantum computing program sponsored by the LANL Information Science \& Technology Institute. J.L.B. was also supported by the National Science Foundation Graduate Research Fellowship under Grant No. 1650115. P.J.C. also acknowledges initial support from the LANL ASC Beyond Moore's Law project.

\bibliography{ref.bib}

\onecolumngrid

\subsection*{Appendix}

\appendix

\section{Proof of Lemma~\ref{lem:bound}}
The fact that the TQFI is a lower bound on the QFI follows directly from the fact that the generalized fidelity is an upper bound for the fidelity. Recall the definition of the QFI and the TQFI, which are respectively defined as
\begin{align}
    I(\theta;\rho_{\theta})=8\lim_{\delta\to0}\frac{1-F(\rho_{\theta},\rho_{\theta+\delta})}{\delta^2}\,,\quad\quad  \mathcal{I}_*(\theta;\rho^{(m)}_{\theta})=8\lim_{\delta\to0}\frac{1-F_*(\rho^{(m)}_{\theta},\rho^{(m)}_{\theta+\delta})}{\delta^2}\,.
\end{align}
From the fact that
    \begin{equation}
        F(\rho_{\theta},\rho_{\theta+\delta})\leq F_*(\rho^{(m)}_{\theta},\rho^{(m)}_{\theta+\delta})\,,
    \end{equation}
we obtain the bound 
\begin{equation}
      \mathcal{I}_*(\theta;\rho^{(m)}_{\theta})\leq   I(\theta;\rho_{\theta})\,.
\end{equation}

Then, let us recall that the generalized fidelity is monotonically  decreasing with $m$~\cite{cerezo2020variationalfidelity}, meaning that we have  $F_*(\rho^{(m)}_{\theta},\rho^{(m)}_{\theta+\delta})\geq F_*(\rho^{(m+1)}_{\theta},\rho^{(m+1)}_{\theta+\delta})$. Hence, from the definition of the TQFI, we find that 
\begin{equation}
      \mathcal{I}_*(\theta;\rho^{(m)}_{\theta})\leq    \mathcal{I}_*(\theta;\rho^{(m+1)}_{\theta})\,.
\end{equation}

\section{Proof of Lemma~\ref{lemma:TQFIexpression}}
\label{app:derivation}
Let us consider a normalized quantum state $\rho_{\theta}=W(\theta)\rho W\ad(\theta)$, where the state $\rho$ has spectral decomposition $\rho = \sum_{j=1}^{r}\lambda_{j}\dya{\lambda_{j}}$, and where $\{\lambda_{j}\}_{j=1}^{r}$ are $\theta$-independent. Then, we have
\begin{align}
    \rho_{\theta}&=\sum_{j=1}^{d}\lambda_{j}e^{-i\theta G}\dya{\lambda_{j}}e^{+i\theta G}=\sum_{j=1}^{d}\lambda_{j}\dya{\lambda_{j}(\theta)}\,,\\
    \rho_{\theta+\delta}&=\sum_{j=1}^{d}\lambda_{j}e^{-i(\theta+\delta) G}\dya{\lambda_{j}}e^{+i(\theta+\delta) G}=\sum_{j=1}^{d}\lambda_{j}e^{-i\delta G}\dya{\lambda_{j}(\theta)}e^{+i\delta G}\,,
\end{align}
where we use the notation 
\begin{align}
    \ket{\lambda_j(\theta)}=e^{-i\theta G}\ket{\lambda_j}\,.
\end{align}
From Eq.~\eqref{eq:TQFIandGeneralizedFidelity}, the TQFI is 
\begin{align}
   \mathcal{I}_*(\theta;\rho^{(m)}_{\theta})=8\lim_{\delta\to0}\frac{1-F_*(\rho^{(m)}_{\theta},\rho^{(m)}_{\theta+\delta})}{\delta^2}\,,
\end{align}
where
\begin{align}\label{eq:Fstar}
    F_{*}(\rho^{(m)}_{\theta},\rho^{(m)}_{\theta+\delta})=\norm{\sqrt{\rho^{(m)}_{\theta}}\sqrt{\rho^{(m)}_{\theta+\delta}}}_{1}+\sqrt{(1-\Tr[\rho^{(m)}_{\theta}])(1-\Tr[\rho^{(m)}_{\theta+\delta}])}\,.
\end{align}
Here, following Ref.~\cite{cerezo2020variationalfidelity}, we can write
\begin{align}
    \norm{\sqrt{\rho_\theta^{(m)}}\sqrt{\rho_{\theta+\delta}^{(m)}}}_1=\Tr\left[\sqrt{T}\right]\,,
\end{align}
where $T$ is an $m\times m$ positive semidefinite operator defined as $T=\sum_{i,j=1}^m T_{ij}\ketbra{\lambda_i}{\lambda_j}$, and where
\begin{align}
    T_{ij}=\sqrt{\lambda_{i}\lambda_{j}}\bra{\lambda_{i}(\theta)}\rho_{\theta+\delta}\ket{\lambda_{j}(\theta)}\,.
\end{align}

For simplicity of notation let us define
\begin{align}
    \IC_{*}(\theta;\rho^{(m)}_{\theta},\delta) = 8\frac{1-F_*(\rho^{(m)}_{\theta},\rho^{(m)}_{\theta+\delta})}{\delta^2}\,,
\end{align}
such that $\IC_{*}(\theta;\rho^{(m)}_{\theta})=\lim_{\delta\to0}\IC_{*}(\theta;\rho^{(m)}_{\theta},\delta)$. To second order in $\delta$, we find 
\begin{align}
    T_{ij}
    =&\sqrt{\lambda_i\lambda_j}\bramatket{\lambda_i}{e^{-i\delta G}\rho e^{i\delta G}   }{\lambda_j}\\
    =&\sqrt{\lambda_i\lambda_j}\bramatket{\lambda_i}{\left(\id -i\delta G-\frac{\delta^2}{2}G^2+\cdots\right) \rho     \left(\id +i\delta G-\frac{\delta^2}{2}G^2+\cdots\right) }{\lambda_j}\\
    =&\sqrt{\lambda_i\lambda_j}~\lambda_i~\delta_{ij}-i\delta\sqrt{\lambda_i\lambda_j}\bramatket{\lambda_i}{\left(G \rho -\rho G \right)}{\lambda_j}+\delta^2\sqrt{\lambda_i\lambda_j}\Bigg(\bramatket{\lambda_i}{G \rho G}{\lambda_j}-\frac{1}{2}\bramatket{\lambda_i}{\left(    G^2 \rho+\rho G^2 \right)}{\lambda_j}\Bigg)+O(\delta^3)\\
  =&\sqrt{\lambda_i\lambda_j}~\lambda_i~\delta_{ij}+i\delta\sqrt{\lambda_i\lambda_j}(\lambda_i-\lambda_j)\bramatket{\lambda_i}{G}{\lambda_j}+\delta^2\sqrt{\lambda_i\lambda_j}\Bigg(\bramatket{\lambda_i}{G \rho G   }{\lambda_j}-\frac{1}{2}(\lambda_i+\lambda_j)\bramatket{\lambda_i}{G^2}{\lambda_j}\Bigg)+O(\delta^3)\,.\label{eq:expansionsecond}
\end{align}
Since we want to find the square root of the operator $T$, we can solve this problem via perturbation by determining an operator $X$ such that $X^2=T$ and $X=\sum_{i,j=1}^m X_{ij}\ketbra{\lambda_i}{\lambda_j}$, with $X$ an $m\times m$ matrix. Hence, from the expansion
\begin{equation}
    X=\sum_{k=0}^\infty\delta^k X^{(k)}\,,
\end{equation}
we find
\begin{equation}\label{eq:perturbation}
    X^2=\sum_{k=0}^{\infty}\delta^k \sum_{p=0}^k X^{(p)}X^{(k-p)}\,.
\end{equation}
To the second order of $\delta$, we can use~\eqref{eq:expansionsecond} to find
\begin{align}
    (X^{(0)})_{ij}&=\left(\sqrt{\lambda_i\lambda_j}\lambda_i\right)^{1/2}\delta_{ij}\,,\\
    (X^{(0)}X^{(1)}+X^{(1)}X^{(0)})_{ij}&=i\sqrt{\lambda_i\lambda_j}(\lambda_i-\lambda_j)\bramatket{\lambda_i}{G}{\lambda_j}\,,\\
     (X^{(0)}X^{(2)}+X^{(2)}X^{(0)}+X^{(1)}X^{(1)})_{ij}&=\sqrt{\lambda_i\lambda_j}\left(\bramatket{\lambda_i}{G\rho G  }{\lambda_j}-\frac{1}{2}(\lambda_i+\lambda_j)\bramatket{\lambda_i}{G^2}{\lambda_j}\right)\,.
\end{align}
These equations allows us to show that 
\begin{align}
    (X^{(1)})_{ij}=&\frac{i\sqrt{\lambda_i\lambda_j}(\lambda_i-\lambda_j)}{\lambda_i+\lambda_j}\bramatket{\lambda_i}{G}{\lambda_j}\,,\\
    (X^{(2)})_{ij}=&\frac{\sqrt{\lambda_i\lambda_j}}{\lambda_i+\lambda_j}\Bigg(\bramatket{\lambda_i}{G\rho G  }{\lambda_j}-\frac{1}{2}(\lambda_i+\lambda_j)\bramatket{\lambda_i}{G^2}{\lambda_j}+\sum_{\ell=1}^{m}\frac{\lambda_{\ell}(\lambda_i-\lambda_{\ell})(\lambda_{\ell}-\lambda_j)}{(\lambda_{i}+\lambda_{\ell})(\lambda_{\ell}+\lambda_{j})}\bramatket{\lambda_i}{G}{\lambda_{\ell}}\bramatket{\lambda_{\ell}}{G}{\lambda_j}\Bigg)\,.
\end{align}
Then, we can compute the trace of $X$ to second order in $\delta$ as
\begin{align}
\begin{split}
\label{eq:expX}
    \norm{&\sqrt{\rho_\theta^{(m)}}\sqrt{\rho_{\theta+\delta}^{(m)}}}_1\\
    &=\Tr[X]=\Tr[X^{(0)}]+\delta\Tr[X^{(1)}]+\delta^2\Tr[X^{(2)}]+O(\delta^3)\\
    &=\sum_{i=1}^m \lambda_i+\frac{\delta^2}{2}\sum_{i=1}^m\left(-\lambda_i\bramatket{\lambda_i}{    G^2 }{\lambda_i}+\sum_{j=1}^d\lambda_{j} |\bramatket{\lambda_i}{    G }{\lambda_{j} }|^2-\sum_{j=1}^m \frac{\lambda_{j} (\lambda_i-\lambda_{j} )^2}{(\lambda_i+\lambda_{j} )^2}|\bramatket{\lambda_i}{    G }{\lambda_{j} }|^2\right)+O(\delta^3)\\
    &=\sum_{i=1}^m \lambda_i-\frac{\delta^2}{2}\sum_{i=1}^m\lambda_i\bramatket{\lambda_i}{G^2}{\lambda_i} +\frac{\delta^2}{2}\sum_{i,j=1}^m\frac{4\lambda_{j} ^2\lambda_i}{(\lambda_i+\lambda_{j} )^2}|\bramatket{\lambda_i}{G}{\lambda_{j} }|^2+\frac{\delta^2}{2}\sum_{i=1}^{m}\sum_{j=m+1}^{d}\lambda_{j} |\bramatket{\lambda_i}{G}{\lambda_{j} }|^2+O(\delta^3)\,.
    \end{split}
\end{align}
Note that throughout our derivations, we use the fact that $\lambda_{j} =0$ for $j>r$. Let us now consider the second term in Eq.~\eqref{eq:Fstar}. To second order in $\delta$ we simply find 
\begin{equation}\label{eq:exptraces}
    \sqrt{\!\left(1\!-\!\Tr\left[\rho^{(m)}_{\theta}\right]\right)\!\!\left(1\!-\!\Tr\left[\rho_{\theta+\delta}^{(m)}\right]\right)}=1-\sum_{i=1}^{m}\lambda_i-\frac{\delta^2}{2}\sum_{i=1}^{m}\left(\sum_{j=1}^{d}\lambda_j|\langle\lambda_i|G|\lambda_j\rangle|^2-\lambda_i\langle\lambda_i|G^2|\lambda_i\rangle\right)+O(\delta^3)
\end{equation}
Then, combining Eqs.~\eqref{eq:expX} and~\eqref{eq:exptraces}, we can obtain
\begin{align}
\begin{split}
  \IC_{*}(\theta,\rho_{\theta}^{(m)})&=\lim_{\delta\to0}\IC_{*}(\theta,\rho_{\theta}^{(m)},\delta)\\
  &=\lim_{\delta\to0}\frac{8}{\delta^2}\left(1-F_{*}(\rho_{\theta}^{(m)},\rho_{\theta+\delta}^{(m)})\right)\\
  &=\lim_{\delta\to0}\frac{8}{\delta^2}\left(1-\norm{\sqrt{\rho_\theta^{(m)}}\sqrt{\rho_{\theta+\delta}^{(m)}}}_1-\sqrt{\!\left(1\!-\!\Tr\left[\rho^{(m)}_{\theta}\right]\right)\!\!\left(1\!-\!\Tr\left[\rho_{\theta+\delta}^{(m)}\right]\right)}\right)\\
  &=\lim_{\delta\to0}\frac{8}{\delta^2}\Bigg(1-\sum_{i=1}^m \lambda_i+\frac{\delta^2}{2}\sum_{i=1}^m\lambda_i\bramatket{\lambda_i}{G^2}{\lambda_i} -\frac{\delta^2}{2}\sum_{i,j=1}^m\frac{4\lambda_{j} ^2\lambda_i}{(\lambda_i+\lambda_{j} )^2}|\bramatket{\lambda_i}{G}{\lambda_{j} }|^2-\frac{\delta^2}{2}\sum_{i=1}^{m}\sum_{j=m+1}^{d}\lambda_{j} |\bramatket{\lambda_i}{G}{\lambda_{j} }|^2\\
  &\quad\quad\quad\quad\quad\quad\quad
  -1+\sum_{i=1}^{m}\lambda_i+\frac{\delta^2}{2}\sum_{i=1}^{m}\sum_{j=1}^{d}\lambda_j|\langle\lambda_i|G|\lambda_j\rangle|^2-\frac{\delta^2}{2}\sum_{i=1}^{m}\lambda_i\bramatket{\lambda_i}{G^2}{\lambda_i}+O(\delta^3)\Bigg)\\
  &=4\sum_{i,j=1}^{m}\lambda_j\abs{\bramatket{\lambda_i}{G}{\lambda_j}}^2-16\sum_{i,j=1}^{m}\frac{\lambda_j^2\lambda_i}{(\lambda_i+\lambda_j)^2}\abs{\bramatket{\lambda_i}{G}{\lambda_j}}^2\\
  &=4\sum_{i,j=1}^{m}\lambda_j\abs{\bramatket{\lambda_i}{G}{\lambda_j}}^2-8\sum_{i,j=1}^{m}\frac{\lambda_i\lambda_j}{\lambda_i+\lambda_j}\abs{\bramatket{\lambda_i}{G}{\lambda_j}}^2\,.
\label{eq:TGQFIlim}
\end{split}
\end{align}
Here, in the last equality we used the fact that 
\begin{align}
\begin{split}
    2\sum_{i,j=1}^{m}\frac{\lambda_i\lambda_{j} ^2}{(\lambda_i+\lambda_{j} )^2}|\bramatket{\lambda_i}{G}{\lambda_{j} }|^2&=\sum_{i,j=1}^{m}\frac{\lambda_i\lambda_{j} ^2}{(\lambda_i+\lambda_{j} )^2}|\bramatket{\lambda_i}{G}{\lambda_{j} }|^2+\sum_{i,j=1}^{m}\frac{\lambda_i^2\lambda_{j} }{(\lambda_i+\lambda_{j} )^2}|\bramatket{\lambda_i}{G}{\lambda_{j} }|^2\\
    &=\sum_{i,j=1}^{m}\frac{\lambda_i\lambda_{j} }{\lambda_i+\lambda_{j} }|\bramatket{\lambda_i}{G}{\lambda_{j} }|^2
\end{split}
\end{align}
Also, we remark that this can be also simplified as
\begin{align}
\IC_{*}(\theta,\rho_{\theta}^{(m)})&=2\sum_{i,j=1}^m \frac{(\lambda_i-\lambda_{j} )^2}{\lambda_i+\lambda_{j} }|\bramatket{\lambda_i}{G}{\lambda_{j} }|^2\,,
\end{align}
and finally, because of the symmetry of the summand in $i$ and $j$, we have
\begin{align}
\IC_{*}(\theta,\rho_{\theta}^{(m)})&=4\sum_{i<j}^m \frac{(\lambda_i-\lambda_{j} )^2}{\lambda_i+\lambda_{j} }|\bramatket{\lambda_i}{G}{\lambda_{j} }|^2\,,
\end{align}

\section{Truncated Symmetric Logarithmic Derivative}
The standard QFI can be defined in terms of the so-called SLD operator. For the state $\rho_{\theta}=\sum_{i=1}^d \lambda_i \ket{\lambda_i (\theta)}\bra{\lambda_i (\theta)}$, the standard SLD operator is~\cite{Mateo09}
\begin{align}
    J_{\theta} = 2\sum_{i,j=1}^d \frac{\bra{\lambda_i (\theta)}\partial_{\theta}\rho_{\theta}\ket{\lambda_j (\theta)}}{\lambda_i +\lambda_j}\ket{\lambda_i (\theta)}\bra{\lambda_j (\theta)}.
\end{align}
Analogously, we can also define the TQFI through a truncated SLD (TSLD) operator. Let the spectral decomposition of our truncated exact state be given as
\begin{align}
    \tau_{\theta} =\sum_{i=1}^m \lambda_i \ket{\lambda_i (\theta)}\bra{\lambda_i(\theta)},
\end{align}
which the parameter dependence left implicit to simplify notation. Then, the TSLD operator is
\begin{align}
     L_{\theta} = 2\sum_{i,j=1}^m \frac{\bra{\lambda_i (\theta)}\partial_{\theta}\tau_{\theta}\ket{\lambda_j (\theta)}}{\lambda_i +\lambda_j}\ket{\lambda_i (\theta)}\bra{\lambda_j (\theta)}.
\end{align}
One can easily verify that $\partial_{\theta} \tau_{\theta} = i [\tau_{\theta},G]$, so that the explicit form of the TSLD operator becomes
\begin{align}
\begin{split}
     L_{\theta} &= 2i\sum_{i,j=1}^m \frac{\lambda_i - \lambda_j}{\lambda_i +\lambda_j}\bra{\lambda_i (\theta)}G\ket{\lambda_j (\theta)}\ket{\lambda_i (\theta)}\bra{\lambda_j (\theta)}\\
     &=2i\sum_{i,j=1}^m \frac{\lambda_i - \lambda_j}{\lambda_i +\lambda_j}\bra{\lambda_i}G\ket{\lambda_j }W(\theta)\ket{\lambda_i}\bra{\lambda_j}W\ad(\theta),
\end{split}
\label{eq:TSLDop}
\end{align}
where we used the fact that $\ket{\lambda_j(\theta)}=W(\theta)\ket{\lambda_j}$ and $[W(\theta),G]=0$.
Taking the conjugate transpose and then exchanging $i$ and $j$, we can easily verify that $L_{\theta}$ is Hermitian, i.e., $L_{\theta}=L_{\theta}\ad$.  In addition to Hermiticity, the justification for regarding $L_{\theta}$ as an SLD operator comes from the following propositions.  
\begin{proposition}
For a $\theta$-parametrized subnormalized state $\tau_{\theta}=W(\theta)\tau W\ad(\theta)$, the TSLD operator satisfies
    \begin{align}
        \partial_{\theta}\tau_{\theta} = \frac{1}{2}(  L_{\theta}  \tau_{\theta} + \tau_{\theta}  L_{\theta} )\,,
    \end{align}
with $\Tr[ L_{\theta} \tau_{\theta}] = 0\,.$
\end{proposition}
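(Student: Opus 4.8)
The plan is to verify both assertions by direct computation in the instantaneous eigenbasis $\{\ket{\lambda_i(\theta)}\}_{i=1}^m$ of the truncated state $\tau_\theta$, where everything reduces to algebra on the $m\times m$ block spanned by these vectors. I would start from the commutator form of the derivative noted above, $\partial_\theta\tau_\theta = i[\tau_\theta,G]$, which follows by differentiating $\tau_\theta=W(\theta)\tau W\ad(\theta)$ and using $[W(\theta),G]=0$ together with the $\theta$-independence of $\tau$ and its spectrum. Sandwiching this between eigenstates and using $\tau_\theta\ket{\lambda_j(\theta)}=\lambda_j\ket{\lambda_j(\theta)}$ gives $\bramatket{\lambda_i(\theta)}{\partial_\theta\tau_\theta}{\lambda_j(\theta)}=i(\lambda_i-\lambda_j)\bramatket{\lambda_i(\theta)}{G}{\lambda_j(\theta)}$, which is exactly the numerator entering the TSLD in~\eqref{eq:TSLDop}.

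The core step is to evaluate the symmetrized product $\tfrac12(L_\theta\tau_\theta+\tau_\theta L_\theta)$. Since $L_\theta$ is a sum of dyads $\ket{\lambda_i(\theta)}\bra{\lambda_j(\theta)}$ with $i,j\leq m$, multiplying $L_\theta$ by $\tau_\theta$ on the right produces a factor $\lambda_j$ and on the left a factor $\lambda_i$ through the eigenvalue equation; adding the two contributions generates an overall factor $(\lambda_i+\lambda_j)$ that cancels the denominator in the definition of $L_\theta$. This leaves $\tfrac12(L_\theta\tau_\theta+\tau_\theta L_\theta)=\sum_{i,j=1}^m i(\lambda_i-\lambda_j)\bramatket{\lambda_i(\theta)}{G}{\lambda_j(\theta)}\ket{\lambda_i(\theta)}\bra{\lambda_j(\theta)}$, which matches the matrix elements of $\partial_\theta\tau_\theta$ found above and thus establishes the Lyapunov identity. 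The trace condition is then immediate and self-contained: taking the trace of $L_\theta\tau_\theta$ collapses the double sum to its diagonal $i=j$ by orthonormality, and each surviving term carries the prefactor $(\lambda_i-\lambda_i)=0$, so $\Tr[L_\theta\tau_\theta]=0$; as a consistency check this also agrees with $\Tr[\partial_\theta\tau_\theta]=\partial_\theta\Tr[\tau_\theta]=0$, since $\Tr[\tau_\theta]=\Tr[\tau]$ is $\theta$-independent.

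The step I expect to demand the most care is the precise sense in which the Lyapunov equation holds. The derivative $\partial_\theta\tau_\theta=i[\tau_\theta,G]$ generically has nonzero matrix elements coupling the $m$-dimensional support of $\tau_\theta$ to its orthogonal complement, namely the terms $\bramatket{\lambda_a(\theta)}{\partial_\theta\tau_\theta}{\lambda_b(\theta)}$ with exactly one index exceeding $m$, whereas $L_\theta$, and hence $\tfrac12(L_\theta\tau_\theta+\tau_\theta L_\theta)$, is supported entirely on $\supp(\tau_\theta)$. The symmetrized product therefore reproduces only the restriction of $\partial_\theta\tau_\theta$ to that support. Exactly as in the standard rank-deficient SLD construction, the resolution is to read the identity as an equality of operators on $\supp(\tau_\theta)$; I would state this explicitly, noting that both sides annihilate the kernel of $\tau_\theta$ and agree block-by-block on its support (with the degenerate $0/0$ contributions, where $\lambda_i=\lambda_j=0$, discarded since their numerators vanish), so that the claimed equation holds in the intended sense.
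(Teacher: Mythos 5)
Your proof is correct and follows essentially the same route as the paper's: expand both sides in the rotated eigenbasis $\{\ket{\lambda_i(\theta)}\}$, use $\tau_\theta\ket{\lambda_j(\theta)}=\lambda_j\ket{\lambda_j(\theta)}$ so that the factor $(\lambda_i+\lambda_j)$ cancels the TSLD denominator, and compare matrix elements; your trace argument (each diagonal term of $L_\theta$ carries a factor $\lambda_i-\lambda_i=0$) is a slightly more direct version of the paper's, which routes it through $\Tr[\partial_\theta\tau_\theta]=i\,\Tr\bigl[[\tau_\theta,G]\bigr]=0$, exactly your consistency check.

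One comparison point is worth recording. Your final paragraph is in fact \emph{more} careful than the paper's own proof: the paper opens with $\partial_\theta\tau_\theta = i\sum_{i,j=1}^m\bramatket{\lambda_i(\theta)}{[\tau_\theta,G]}{\lambda_j(\theta)}\ketbra{\lambda_i(\theta)}{\lambda_j(\theta)}$, which silently discards the cross terms $\bramatket{\lambda_a(\theta)}{\partial_\theta\tau_\theta}{\lambda_b(\theta)}=i\lambda_a\bramatket{\lambda_a}{G}{\lambda_b}$ with $a\leq m<b$; you are right that these are generically nonzero and that the Lyapunov identity therefore holds only after compression onto $\supp(\tau_\theta)$. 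However, one clause in your resolution is incorrect: it is \emph{not} true that both sides annihilate the kernel of $\tau_\theta$. For $b>m$ one has $\partial_\theta\tau_\theta\ket{\lambda_b(\theta)}=i\,\tau_\theta G\ket{\lambda_b(\theta)}\neq 0$ in general, and this failure is precisely why the identity cannot hold as a full-space operator equation. The correct formulation is the other reading you give: with $\Pi$ the projector onto $\supp(\tau_\theta)$, one has $\Pi\,(\partial_\theta\tau_\theta)\,\Pi=\tfrac{1}{2}(L_\theta\tau_\theta+\tau_\theta L_\theta)$, i.e., the two sides agree as forms on the support (this is also the sense in which the paper's proposition should be understood).
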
    

\begin{proof}
    First, using the fact that $\partial_{\theta} \tau_{\theta} = i[\tau_{\theta},G]$ and $\tau_{\theta}\ket{\lambda_i(\theta)}=\lambda_i\ket{\lambda_i(\theta)}$, we can write
    \begin{align}
        \partial_{\theta} \tau_{\theta} &= i \sum_{i,j=1}^m \bra{\lambda_i(\theta)}[\tau_{\theta},G]\ket{\lambda_j(\theta)}\ket{\lambda_i(\theta)}\bra{\lambda_j(\theta)},\\
        &=i \sum_{i,j=1}^m (\lambda_i -\lambda_j)\bra{\lambda_i}G\ket{\lambda_j}\ket{\lambda_i(\theta)}\bra{\lambda_j(\theta)}.
    \end{align}
    Then, making use of the explicit expansions of $\tau_{\theta}$ and $L_{\theta}$ in the eigenbasis of $\tau_{\theta}$, we can write
    \begin{align}
    \frac{1}{2} ( L_{\theta}  \tau_{\theta}+\tau_{\theta} L_{\theta}  )    
    &=i\sum_{i,j=1}^{m}\left(\frac{\lambda_{i}-\lambda_{j}}{\lambda_{i}+\lambda_{j}}\right)\bra{\lambda_{i}(\theta)}G\ket{\lambda_j(\theta)}(\ket{\lambda_{i}(\theta)}\bra{\lambda_{j}(\theta)}\tau_{\theta}+\tau_{\theta} \ket{\lambda_{i}(\theta)}\bra{\lambda_{j}(\theta)})\\
    &=i\sum_{i,j=1}^{m}\left(\frac{\lambda_{i}-\lambda_{j}}{\lambda_{i}+\lambda_{j}}\right)\bra{\lambda_{i}}G\ket{\lambda_j}(\lambda_i+\lambda_j)\ket{\lambda_{i}(\theta)}\bra{\lambda_{j}(\theta)}\\
    &=i \sum_{i,j=1}^m (\lambda_i -\lambda_j)\bra{\lambda_i}G\ket{\lambda_j}\ket{\lambda_i(\theta)}\bra{\lambda_j(\theta)}\,.
    \end{align}
    Comparing these two expressions, we see that indeed 
    \begin{align}
        \partial_{\theta}\tau_{\theta} = \frac{1}{2}(L_{\theta}\tau_{\theta}+\tau_{\theta} L_{\theta}),
    \end{align}
    as is required of a well-defined SLD operator. Finally, because $\Tr[\tau_{\theta},G]=0$, we have
    \begin{align}
        \Tr[\partial_{\theta}\tau_{\theta}] = i\Tr[\tau_{\theta},G]= \Tr\left[\frac{1}{2}(L_{\theta}\tau_{\theta}+\tau_{\theta} L_{\theta})\right]=\Tr[L_{\theta}\tau_{\theta}]=0\,.
    \end{align}
\end{proof}

\begin{proposition}
The TQFI $\mathcal{I}_{*}(\theta;\tau_{\theta})$  can be expressed as 
\begin{align}
    \mathcal{I}_{*}(\theta;\tau_{\theta})= \Tr[L_{\theta}^2\tau_{\theta}]\,.
\end{align}
\end{proposition}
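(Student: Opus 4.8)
The plan is to evaluate $\Tr[L_\theta^2 \tau_\theta]$ directly in the eigenbasis $\{\ket{\lambda_i(\theta)}\}$ of $\tau_\theta$ and show that it reproduces the expression for the TQFI established in Lemma~\ref{lemma:TQFIexpression}. I would target the compact form derived at the end of Appendix~\ref{app:derivation}, namely $\mathcal{I}_{*}(\theta;\tau_{\theta}) = 2\sum_{i,j=1}^m \frac{(\lambda_i-\lambda_j)^2}{\lambda_i+\lambda_j}|G_{ij}|^2$, since it is the cleanest thing to match against.

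First I would substitute the explicit TSLD from Eq.~\eqref{eq:TSLDop}, writing $L_\theta = 2i\sum_{i,j} c_{ij}\ket{\lambda_i(\theta)}\bra{\lambda_j(\theta)}$ with $c_{ij} = \frac{\lambda_i-\lambda_j}{\lambda_i+\lambda_j}G_{ij}$, and compute $L_\theta^2$. The key simplification is that the vectors $\ket{\lambda_i(\theta)} = W(\theta)\ket{\lambda_i}$ are orthonormal, so $\braket{\lambda_j(\theta)}{\lambda_k(\theta)} = \delta_{jk}$ collapses the inner pair of indices, giving $L_\theta^2 = -4\sum_{i,j,l} c_{ij}c_{jl}\ket{\lambda_i(\theta)}\bra{\lambda_l(\theta)}$.

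Next I would multiply on the right by $\tau_\theta = \sum_k \lambda_k \dya{\lambda_k(\theta)}$ and take the trace. Using orthonormality twice — once via $\bra{\lambda_l(\theta)}\tau_\theta = \lambda_l\bra{\lambda_l(\theta)}$ and once to close the trace through $\braket{\lambda_l(\theta)}{\lambda_i(\theta)} = \delta_{li}$ — forces $l=i$ and reduces the triple sum to $\Tr[L_\theta^2\tau_\theta] = -4\sum_{i,j} c_{ij}c_{ji}\lambda_i$. Since $G$ is Hermitian, $G_{ji} = \overline{G_{ij}}$, so $c_{ij}c_{ji} = -\frac{(\lambda_i-\lambda_j)^2}{(\lambda_i+\lambda_j)^2}|G_{ij}|^2$, yielding the double sum $\Tr[L_\theta^2\tau_\theta] = 4\sum_{i,j}\frac{(\lambda_i-\lambda_j)^2}{(\lambda_i+\lambda_j)^2}\lambda_i|G_{ij}|^2$.

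The final step, which I anticipate is the only one requiring a small trick rather than routine bookkeeping, is to symmetrize the summand under $i \leftrightarrow j$. Because $|G_{ij}|^2$ and $\frac{(\lambda_i-\lambda_j)^2}{(\lambda_i+\lambda_j)^2}$ are both invariant under the exchange, averaging the $\lambda_i$ prefactor against its swapped counterpart replaces $\lambda_i$ by $\frac{1}{2}(\lambda_i+\lambda_j)$, canceling one factor of $(\lambda_i+\lambda_j)$ in the denominator and leaving exactly $2\sum_{i,j}\frac{(\lambda_i-\lambda_j)^2}{\lambda_i+\lambda_j}|G_{ij}|^2 = \mathcal{I}_{*}(\theta;\tau_{\theta})$. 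The main obstacle is nothing deep: it is keeping the index labels and the orthonormality contractions straight through the squaring and tracing, while correctly tracking the sign from $(2i)^2 = -4$ and the sign flip in $(\lambda_i-\lambda_j)(\lambda_j-\lambda_i)$; the algebra is otherwise elementary.
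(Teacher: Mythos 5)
Your proposal is correct and follows essentially the same route as the paper's proof: both expand $L_\theta$ and $\tau_\theta$ in the eigenbasis $\{\ket{\lambda_i(\theta)}\}$, use orthonormality to collapse the trace to the same intermediate expression $4\sum_{i,j=1}^m \lambda_i\frac{(\lambda_i-\lambda_j)^2}{(\lambda_i+\lambda_j)^2}|G_{ij}|^2$, and then symmetrize in $i\leftrightarrow j$ to recover an established form of the TQFI. The only difference is cosmetic: you match the compact form $2\sum_{i,j=1}^m\frac{(\lambda_i-\lambda_j)^2}{\lambda_i+\lambda_j}|G_{ij}|^2$ in a single symmetrization step, whereas the paper reaches the two-term expression of Lemma~\ref{lemma:TQFIexpression} through a longer chain of the same algebraic identities.
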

    
\begin{proof}
    $\mathcal{I}_{*}(\theta;\tau_{\theta})$ is given by 
    \begin{align}
    \mathcal{I}_*(\theta;\tau_{\theta}) =
    4\sum_{i,j=1}^m \lambda_i|\bra{\lambda_{i}}G\ket{\lambda_{j}}|^2-8\sum_{i,j=1}^m\frac{\lambda_i\lambda_j}{\lambda_i+\lambda_j}|\bra{\lambda_{i}}G\ket{\lambda_{j}}|^2
    \label{eq:TQFIappendix}
    \end{align}
    From Eq.~\eqref{eq:TSLDop}, we have
    \begin{align}
        L_{\theta} = 2i\sum_{i,j=1}^{m}\frac{\lambda_i-\lambda_j}{\lambda_i+\lambda_j}\bra{\lambda_i}G\ket{\lambda_j}W(\theta)\ket{\lambda_i}\bra{\lambda_j}W\ad(\theta)\,,
    \end{align}
    and
    \begin{align}
    \tau_{\theta}=W(\theta)\tau W\ad(\theta)=\sum_{i=1}^{m}\lambda_i W(\theta)\dya{\lambda_i} W\ad(\theta)\,.
    \end{align}
    Therefore, we obtain
    \begin{align}
        \Tr[ L_{\theta}^2 \tau_{\theta}] =&-4\sum_{\substack{i,j=1\\k,\ell=1}}^{m}\sum_{r=1}^{m}\left(\frac{\lambda_{i}-\lambda_{j}}{\lambda_{i}+\lambda_{j}}\right)\left(\frac{\lambda_{k}-\lambda_{\ell} }{\lambda_{k}+\lambda_{\ell} }\right)\bra{\lambda_{i}}G\ket{\lambda_{j}}\bra{\lambda_{k}}G\ket{\lambda_{\ell} }\cdot\lambda_{r}\cdot\delta_{jk}\delta_{\ell r}\delta_{ir}\\
        =&4\sum_{i,j=1}^{m}\lambda_{i}\left(\frac{\lambda_{i}-\lambda_{j}}{\lambda_{i}+\lambda_{j}}\right)^2\abs{\bra{\lambda_{i}}G\ket{\lambda_{j}}}^2\\
        =&4\sum_{i,j=1}^{m}\frac{\lambda_{i}^3-2\lambda_{i}^2\lambda_{j}+\lambda_{i}\lambda_{j}^2}{(\lambda_{i}+\lambda_{j})^2}\abs{\bra{\lambda_{i}}G\ket{\lambda_{j}}}^2\\
        =&4\sum_{i,j=1}^{m}\frac{\lambda_{i}^3+\lambda_{i}\lambda_{j}^2}{(\lambda_{i}+\lambda_{j})^2}\abs{\bra{\lambda_{i}}G\ket{\lambda_{j}}}^2-4\sum_{i,j=1}^{m}\frac{2\lambda_{i}^2\lambda_{j}}{(\lambda_{i}+\lambda_{j})^2}\abs{\bra{\lambda_{i}}G\ket{\lambda_{j}}}^2\\
        =&2\sum_{i,j=1}^{m}\frac{\lambda_{i}^3+\lambda_{i}\lambda_{j}^2+ \lambda_{j}^3+\lambda_{j}\lambda_{i}^2}{(\lambda_{i}+\lambda_{j})^2}\abs{\bra{\lambda_{i}}G\ket{\lambda_{j}}}^2-4\sum_{i,j=1}^{m}\frac{\lambda_{i}^2\lambda_{j}+\lambda_{j}^2\lambda_{i}}{(\lambda_{i}+\lambda_{j})^2}\abs{\bra{\lambda_{i}}G\ket{\lambda_{j}}}^2\\
        =&2\sum_{i,j=1}^{m}\frac{(\lambda_{i}+\lambda_{j})^3-2\lambda_{i}\lambda_{j}(\lambda_{i}+\lambda_{j})}{(\lambda_{i}+\lambda_{j})^2}\abs{\bra{\lambda_{i}}G\ket{\lambda_{j}}}^2-4\sum_{i,j=1}^{m}\frac{\lambda_{i}\lambda_{j}(\lambda_{i}+\lambda_{j})}{(\lambda_{i}+\lambda_{j})^2}\abs{\bra{\lambda_{i}}G\ket{\lambda_{j}}}^2\\
        =&2\sum_{i,j=1}^{m}(\lambda_{i}+\lambda_{j})\abs{\bra{\lambda_{i}}G\ket{\lambda_{j}}}^2-8\sum_{i,j=1}^{m}\frac{\lambda_{i}\lambda_{j}}{\lambda_{i}+\lambda_{j}}\abs{\bra{\lambda_{i}}G\ket{\lambda_{j}}}^2\\
        =&4\sum_{i,j=1}^{m}\lambda_{i}\abs{\bra{\lambda_{i}}G\ket{\lambda_{j}}}^2-8\sum_{i,j=1}^{m}\frac{\lambda_{i}\lambda_{j}}{\lambda_{i}+\lambda_{j}}\abs{\bra{\lambda_{i}}G\ket{\lambda_{j}}}^2\,,
    \end{align}
    which leads to
    \begin{align}
       \mathcal{I}_{*}(\theta;\tau_{\theta})=\Tr[ L_{\theta}^2 \tau_{\theta}]\,.
    \end{align}
\end{proof}

\section{Proof of Lemma~\ref{lemma:TQFIproperty}}
\label{app:property}

In the following, we prove each property in Lemma~\ref{lemma:TQFIproperty}.
\begin{enumerate}
    \item{Invariance under unitary transformations:
    Given a unitary $V\in U(d)$ which is $\theta$-independent,  since the generalized fidelity~\cite{tomamichel2015quantum} is unitary-invariant, i.e., 
    \begin{align}
        F_{*} (V\tau_{\theta}V\ad,V\tau_{\theta+\delta}V\ad  ) = F_{*} (\tau_{\theta},\tau_{\theta+\delta}  )\,,
    \end{align}
    we obtain
    \begin{align}
        \mathcal{I}_{*}(\theta;V\tau_{\theta}V\ad  )=\mathcal{I}_{*}(\theta;\tau_{\theta}  )\,.
    \end{align}
    }
     \item{Convexity:
     Let $\tau_{\theta}$ and $\xi_{\theta}$ be subnormalized states. Since the generalized fidelity is jointly concave~\cite{tomamichel2015quantum}, we have
    \begin{align}
    F_{*}\Big(q\tau_{\theta}+(1-q)\xi_{\theta},q\tau_{\theta+\delta}+(1-q)\xi_{\theta+\delta}\Big)\geq q F_{*} (\tau_{\theta},\tau_{\theta+\delta}  )+(1-q) F_{*} (\xi_{\theta},\xi_{\theta+\delta}  )\,.
    \end{align}
    Hence, for all $\delta$, we have
    \begin{align}
         \mathcal{I}_{*}(\theta;q\tau_{\theta}+(1-q)\xi_{\theta})\leq q \mathcal{I}_{*}(\theta;\tau_{\theta}  )+(1-q)\mathcal{I}_{*}(\theta;\xi_{\theta})\,.
    \end{align}
    }
    \item{Monotonicity under CPTNI map:
    Here, we employ the monotonicity of the generalized fidelity~\cite{tomamichel2015quantum}: 
    \begin{align}
        F_{*} (\tau_{\theta},\tau_{\theta+\delta}  )\leq F_{*} (\Phi (\tau_{\theta}  ),\Phi (\tau_{\theta+\delta}  )  )
    \end{align}
    for a CPTNI map $\Phi$. 
    For all $\delta$, we have
    \begin{align}
        \frac{1-F_{*}(\tau_{\theta},\tau_{\theta+\delta})}{\delta^2} \geq \frac{1-F_{*}(\Phi(\tau_{\theta}),\Phi(\tau_{\theta+\delta}))}{\delta^2}\,,
    \end{align}
    so that  we have  
    \begin{align}
        \mathcal{I}_{*}(\theta;\tau_{\theta}  )\geq \mathcal{I}_{*}(\theta;\Phi(\tau_{\theta}))\,.
    \end{align}
    }
   
    \item{Subadditivity  for  product  of  truncated  states:
    Consider a subnormalized state obtained from a tensor product of subnormalized state $\tau_{\theta}=\bigotimes_{k}\tau_{\theta}^{(k)}$. We have
    \begin{align}
        \partial_{\theta}\tau_{\theta} = \sum_{k}\partial_{\theta}\tau_{\theta}^{(k)}\otimes \tau_{\theta}^{\overline{(k)}}
        =\sum_{k}\frac{L_{k,\theta}\tau_{\theta}^{(k)}+\tau_{\theta}^{(k)}L_{k,\theta}}{2}\otimes \tau_{\theta}^{\overline{(k)}}\,,
    \end{align}
    where we define
    \begin{align}
        \tau_{\theta}^{\overline{(k)}} =\bigotimes_{j\neq k} \tau_{\theta}^{(j)}\,.
    \end{align}
    Therefore, the TSLD operator becomes
    \begin{align}
         L_{\theta} =\sum_{k}L_{k,\theta}\otimes\openone_{\overline{k}}\,.
    \end{align}
    Then, we can obtain
    \begin{align}
        \mathcal{I}_{*}(\theta;\tau_{\theta})=\Tr [L_{\theta}^2\tau_{\theta}] = \sum_{k}\mathcal{A}_{k}\Tr [L_{k,\theta}^2\tau_{\theta}^{(k)}]\,,
    \end{align}
    where
    \begin{align}
        \mathcal{A}_{k}=\prod_{j\neq k}\Tr\left[\tau^{(j)}_{\theta}\right]\leq 1\,.
    \end{align}
    Therefore, 
    \begin{align}
        \mathcal{I}_{*}(\theta;\tau_{\theta})=\sum_{k}\mathcal{A}_{k}\mathcal{I}_*(\theta;\tau_{\theta}^{(k)})\leq \sum_{k}\mathcal{I}_*(\theta;\tau_{\theta}^{(k)})\,.
  \end{align}
  
    }
    \item{Additivity for direct sum of truncated states:
    For $\tau_{\theta}=\bigoplus_{k}\mu_k\tau_{\theta}^{(k)}$, where $\mu_k$ is $\theta$-independent and $0<\sum_{k}\mu_{k}\leq 1$,  we have
    \begin{align}
        \partial_{\theta}\tau_{\theta}=\bigoplus_{k}\mu_k \partial_{\theta}\tau_{\theta}^{(k)}
        =\bigoplus_{k}\mu_k \frac{L_{k,\theta}\tau_{\theta}^{(k)}+\tau_{\theta}^{(k)}L_{k,\theta}}{2}\,.
    \end{align}
    Then, the TSLD operator becomes
    \begin{align}
         L_{\theta} =\bigoplus_{k}L_{k,\theta}\,.
    \end{align}
    Therefore, 
    \begin{align}
        \mathcal{I}_{*}(\theta;\tau_{\theta})=\Tr [ L_{\theta}^2 \tau_{\theta} ]=\sum_{k}\mu_{k}\Tr [L_{k,\theta}^2\tau_{\theta}^{(k)}]\,,
    \end{align}
    so that 
    \begin{align}
        \mathcal{I}_{*}(\theta;\tau_{\theta})=\sum_{k}\mu_k \mathcal{I}_{*}(\theta;\tau_{\theta}^{(k)})\,,
    \end{align}
    }
\end{enumerate}
and $\sqrt{T}$ is taken to be the unique, positive semi-definite square root of $T$. 

Finally, let us remark that an alternative proof for the subadditivity  for  the product  of  truncated  states can be obtained as follows. {This is for the readers who are not familiar with the SLD operator.} It will suffice to show it is true in the bipartite case (as larger product states follow by induction). So, we consider a subnormalized state of the form
\begin{align}
    \tau_{\theta} = \tau_{\theta}^{(1)} \otimes \tau_{\theta}^{(2)}\in \SC_{\leq}(\HC_1)\otimes\SC_{\leq}(\HC_2)\,,
\end{align}
where $\dim(\HC_k)=d_k$ with $k=1,2$. Here, we focus on the unitary families. First, recall that the TQFI is defined as
\begin{align}
\IC_{*}(\theta,\rho_{\theta}^{(m)})&=2\sum_{i,j=1}^m \frac{(\lambda_i-\lambda_{j} )^2}{\lambda_i+\lambda_{j} }|\bramatket{\lambda_i}{G}{\lambda_{j} }|^2\,.
\end{align}
As this explicit form depends on the eigensystem of our state and the generator of the unitary dynamics, let us prove explicitly, when the subnormalized state of the subspace belongs to the unitary families, we have
\begin{align}
    \tau_{\theta} &= \tau_{\theta}^{(1)} \otimes \tau_{\theta}^{(2)} \\
    &= e^{-i\theta G^{(1)}} \tau^{(1)} e^{+i\theta G^{(1)}} \otimes e^{-i\theta G^{(2)}} \tau^{(2)} e^{+i\theta G^{(2)}}\\
    &= (e^{-i\theta G^{(1)}} \otimes e^{-i\theta G^{(2)}})(\tau^{(1)}\otimes \tau^{(2)})(e^{+i\theta G^{(1)}} \otimes e^{+i\theta G^{(2)}})\,.
\end{align}
A useful quantity needed here is the Kronecker sum defined as
\begin{align}
    A^{(1)} \oplus B^{(2)} = A^{(1)}\otimes \openone^{(2)} + \openone^{(1)} \otimes B^{(2)}\,,
\end{align}
and we recall here the following useful identity  
\begin{align}
    e^{A} \otimes e^{B} = e^{A\oplus B}=e^{A^{(1)}\otimes \openone^{(2)} + \openone^{(1)} \otimes B^{(2)}}\,,
\end{align}
where $\oplus$ is the Kronecker sum defined above. Hence, we have
\begin{align}
    \tau_{\theta} &= (e^{-i\theta G^{(1)}} \otimes e^{-i\theta G^{(2)}})(\tau^{(1)}\otimes \tau^{(2)})(e^{+i\theta G^{(1)}} \otimes e^{+i\theta G^{(2)}}) \\
    &= e^{-i \theta(G^{(1)}\otimes \openone^{(2)} + \openone^{(1)} \otimes G^{(2)})} (\tau^{(1)} \otimes \tau^{(2)}) e^{+i \theta(G^{(1)}\otimes \openone^{(2)} + \openone^{(1)} \otimes G^{(2)})}\,.
\end{align}

As for the eigensystem, we note that the Hilbert space is now of the form $\mathcal{H}^{(1)} \otimes \mathcal{H}^{(2)}$ so the eigenvalues and eigenvectors are now of the form 
\begin{align}
    \tau_{\theta}^{(1)} \otimes \tau_{\theta}^{(2)} \ket{\lambda_{i}} \otimes \ket{\lambda_{j}} = \lambda_{i} \lambda_{j} \ket{\lambda_{i}} \otimes \ket{\lambda_{j}}\,,
\end{align}
where we have that $i \in [1,m_1]$ and $j\in[1,m_2]$, and where $0<\sum_i \lambda_i \leq 1$ and $0<\sum_j \lambda_j \leq 1$. Together, TQFI becomes
\begin{align}
      \IC_{*}(\theta,\tau_{\theta}^{(1)} \otimes\tau_{\theta}^{(2)})    &=2 \sum_{i,k=1}^{m_1} \sum_{j,\ell=1}^{m_2} \frac{(\lambda_{i} \lambda_{j} - \lambda_{k} \lambda_{\ell} )^2}{\lambda_{i} \lambda_{j} +  \lambda_{k} \lambda_{\ell} } |\bra{\lambda_{i}}\otimes \bra{\lambda_{j}} (G^{(1)}\otimes \openone^{(2)} + \openone^{(1)} \otimes G^{(2)}) \ket{\lambda_{k}}\otimes \ket{\lambda_{\ell} }|^2\,.
\end{align}
Let us now expand the matrix element part of the expression
\begin{align}
    &|\bra{\lambda_{i}}\otimes \bra{\lambda_{j}} (G^{(1)}\otimes \openone^{(2)} + \openone^{(1)} \otimes G^{(2)}) \ket{\lambda_{k}}\otimes \ket{\lambda_{\ell} }|^2 \\
    &= (\bra{\lambda_{i}}\otimes \bra{\lambda_{j}} (G^{(1)}\otimes \openone^{(2)} + \openone^{(1)} \otimes G^{(2)}) \ket{\lambda_{k}}\otimes \ket{\lambda_{\ell} })\nonumber\\
    &\times (\bra{\lambda_{k}}\otimes \bra{\lambda_{\ell} } (G^{(1)}\otimes \openone^{(2)} + \openone^{(1)} \otimes G^{(2)}) \ket{\lambda_{i}}\otimes \ket{\lambda_{j}})\\
    &= \big(\bra{\lambda_{i}}G^{(1)}\ket{\lambda_{k}}\delta_{j\ell} + \delta_{ik}\bra{\lambda_{j}}G^{(2)}\ket{\lambda_{\ell} } \big)\big(\bra{\lambda_{k}}G^{(1)}\ket{\lambda_{i}}\delta_{j\ell} + \delta_{ik}\bra{\lambda_{\ell} }G^{(2)}\ket{\lambda_{j}} \big)\\
    &= |\bra{\lambda_{i}}G^{(1)}\ket{\lambda_{k}}|^2 \delta_{j\ell} \delta_{j\ell} +(\bra{\lambda_{i}}G^{(1)}\ket{\lambda_{k}})(\bra{\lambda_{\ell} }G^{(2)}\ket{\lambda_{j}})\delta_{ik}\delta_{j\ell}\nonumber\\
    &+(\bra{\lambda_{\ell}}G^{(2)}\ket{\lambda_{j} })(\bra{\lambda_{k}}G^{(1)}\ket{\lambda_{i}})\delta_{j\ell}\delta_{ik} + |\bra{\lambda_{j}}G^{(2)}\ket{\lambda_{\ell} }|^2 \delta_{ik}\delta_{ik}
\end{align}
Replacing this expansion in the summation of TQFI sum, we see that the terms with  $\delta_{ik}\delta_{j\ell}$ lead to $\lambda_{i} \lambda_{j} - \lambda_{k} \lambda_{\ell} =0$. Hence, the first term in the TQFI becomes
\begin{align}
&2 \sum_{i,k=1}^{m_1} \sum_{j,\ell=1}^{m_2} \frac{(\lambda_{i} \lambda_{j} - \lambda_{k} \lambda_{\ell} )^2}{\lambda_{i} \lambda_{j} +  \lambda_{k} \lambda_{\ell} } |\bra{\lambda_{i}}\otimes \bra{\lambda_{j}} (G^{(1)}\otimes \openone^{(2)} + \openone^{(1)} \otimes G^{(2)}) \ket{\lambda_{k}}\otimes \ket{\lambda_{\ell} }|^2\\
   &=2 \sum_{i,k=1}^{m_1} \sum_{j,\ell=1}^{m_2} \frac{(\lambda_{i} \lambda_{j} - \lambda_{k} \lambda_{\ell} )^2}{\lambda_{i} \lambda_{j} +  \lambda_{k} \lambda_{\ell} } \left(|\bra{\lambda_{i}}G^{(1)}\ket{\lambda_{k}}|^2 \delta_{j\ell}  +|\bra{\lambda_{j}}G^{(2)}\ket{\lambda_{\ell} }|^2 \delta_{ik}\right) \\
     &= 2 \sum_{i,k=1}^{m_1} \sum_{j=1}^{m_2} \frac{\lambda_{j}^2 (\lambda_{i}  - \lambda_{k} )^2}{\lambda_{j}(\lambda_{i}  +  \lambda_{k}) } |\bra{\lambda_{i}}G^{(1)}\ket{\lambda_{k}}|^2 +2  \sum_{i=1}^{m_1} \sum_{j=1}^{m_2} \frac{\lambda_{i}^2 (\lambda_{j}  - \lambda_{\ell} )^2}{\lambda_{i}(\lambda_{j}  +  \lambda_{\ell}) } |\bra{\lambda_{j}}G^{(2)}\ket{\lambda_{\ell}}|^2\\
     &\leq 2 \sum_{i,k=1}^{m_1} \frac{ (\lambda_{i}  - \lambda_{k} )^2}{\lambda_{i}  +  \lambda_{k}} |\bra{\lambda_{i}}G^{(1)}\ket{\lambda_{k}}|^2 + 2 \sum_{j,\ell=1}^{m_2} \frac{(\lambda_{j} - \lambda_{\ell} )^2}{ \lambda_{j} + \lambda_{\ell} } |\bra{\lambda_{j}}G^{(2)}\ket{\lambda_{\ell} }|^2\,, 
\end{align}
Therefore, we get 
\begin{align}
    \IC_{*}(\theta,\tau_{\theta}^{(1)} \otimes\tau_{\theta}^{(2)} ) \leq \IC_{*}(\theta,\tau_{\theta}^{(1)}) + \IC_{*}(\theta,\tau_{\theta}^{(2)})\,
\end{align}
as desired. 

\section{Proof of Lemma~\ref{lemma:Bures}}
\label{app:proofBures}
Let $\sigma,~\xi,$ and $\eta$ be subnormalized states in $\SC_{\leq}(\HC)$. Then we have that the following properties of the generalized Bures distance hold:
\begin{enumerate}
    \item{Symmetry:  Because of $F_{*} (\sigma,\xi  )=F_{*} (\xi,\sigma  )$, we have $B_{*} (\sigma,\xi  ) = B_{*} (\xi,\sigma  )$.}
    \item{Identity of indiscernibles: Because $F_{*} (\sigma,\xi  )=1$ if and only if $\sigma=\xi$, we have $B_{*} (\sigma,\xi  ) =0$ if and only if $\sigma=\xi$.}
    \item{Triangular inequality: Let $A_{*} (\sigma,\xi  )$ be the generalized angular distance $ A_{*} (\sigma,\xi  ) =\arccos (F_{*} (\sigma,\xi  )  ),$
    and $0 \leq A_{*} (\sigma,\xi  )\leq\frac{\pi}{2}$.
    Then, we can write
    \begin{align}
        B_{*} (\sigma,\xi)=2\sin \left(\frac{A_{*} (\sigma,\xi)}{2}\right)\,.
    \end{align}
    From the triangle inequality for the generalized angular distance~\cite{tomamichel2010duality,tomamichel2015quantum}, we have
    \begin{align}
    \begin{split}
        B_{*} (\tau,\xi  )=2\sin \left(\frac{A_{*} (\sigma,\xi)}{2}\right)
        \leq 2\sin \left(\frac{A_{*} (\sigma,\eta  )}{2}  \right)+2\sin \left(\frac{A_{*} (\eta,\xi  )}{2}  \right)=B_{*} (\sigma,\eta  )+B_{*} (\eta,\xi  )\,.
    \end{split}
    \end{align}
    }
\end{enumerate}
These prove that $B_{*}(\sigma,\xi)$ is a distance metric on the space of subnormalized state. Let us finally remark that the generalized Bures distance can also be expressed as  $B_{*}(\sigma,\xi)=2P^2(\sigma,\xi)$, where  $P(\sigma,\xi)=\sqrt{1-F_*(\sigma,\xi)}$ is the so-called purified distance~\cite{tomamichel2010duality,tomamichel2015quantum}.

\section{{Proof of Lemma~\ref{lemma:JandB}}}
\label{app:proofBandI}

Let us consider $B_{*}^{2} (\rho_{\theta}^{(m)},\rho_{\theta+\delta}^{(m)})$. Suppose that $B_{*}^{2} (\rho_{\theta}^{(m)},\rho_{\theta+\delta}^{(m)})$ has the form
\begin{align}
    B_*^{2}(\rho_{\theta}^{(m)},\rho_{\theta+\delta}^{(m)}) = \frac{1}{4}\sum_{k=0}^{\infty}b_k \delta^k = \frac{b_0}{4}+\frac{b_1}{4}\delta+\frac{b_2}{4}\delta^2+O(\delta^3)\,,
\end{align}
where $b_k\in\mathbb{R}$ and $|b_k|<\infty$.
Defining $F_{*}(\delta)=F_{*}(\rho_{\theta}^{(m)},\rho_{\theta+\delta}^{(m)})$,  by definition, we have
\begin{align}
  B_*^{2}(\rho_{\theta}^{(m)},\rho_{\theta+\delta}^{(m)})= 2 (1-F_{*} (\delta)) 
  = 2 -2 \left(F_{*}(0)+\delta \partial_{\delta}F_{*}(0)+\frac{\delta^2}{2}\partial_{\delta}^2 F_{*}(0)+O(\delta^3)\right)\,,
\end{align}
From $F_*(0)=1$ and $\partial_{\delta}F_{*}(0)=0$ because the generalized fidelity is a continuous function of $\delta$ and  becomes maximum at $\delta=0$, we have
\begin{align}
   B_*^{2}(\rho_{\theta}^{(m)},\rho_{\theta+\delta}^{(m)})=-\delta^2\partial_{\delta}^2 F_{*}(0)+O(\delta^3)\,. 
\end{align}
Therefore, we arrive at the following equality
\begin{align}
    \frac{b_0}{4}+\frac{b_1}{4}\delta+\frac{b_2}{4}\delta^2+O (\delta^3) = -\delta^2\partial_{\delta}^2 F_{*}(0)+O(\delta^3)\,,
\end{align}
which has to be valid for any infinitesimal $\delta$. Therefore, we must have $b_0=b_1=0$, and 
\begin{align}
    b_2 = -4\partial_{\delta}^2 F_{*}(0)\,.
\end{align}

Here, applying $F_*(0)=1$ and $\partial_{\delta}F_{*}(0)=0$, by definition of the truncated QFI, we can also obtain   
\begin{align}
    \mathcal{I}_{*}(\theta;\rho_{\theta}^{(m)})= 8\lim_{\delta\to 0}\frac{1-F_{*}(\delta)}{\delta^2} = -4\partial_{\delta}^2 F_{*}(0)\,,
\end{align}
which leads to 
\begin{align}
    b_2 = \mathcal{I}_{*}(\theta;\rho_{\theta}^{(m)})\,.
\end{align}
Therefore, we obtain
\begin{align}
    B_*^{2}(\rho_{\theta}^{(m)},\rho_{\theta+\delta}^{(m)}) = \frac{1}{4}\mathcal{I}_{*}(\theta;\rho_{\theta}^{(m)})\delta^2+O (\delta^3)\,.
\end{align}

\end{document}